\setlist[enumerate,1]{label={(\arabic*)}}
\theoremstyle{plain}
\newtheorem{thm}{Theorem}[section]
\newtheorem{lem}[thm]{Lemma}
\newtheorem{cor}[thm]{Corollary}
\newtheorem{defn}[thm]{Definition}
\theoremstyle{definition}
\newtheorem{rem}[thm]{Remark}
\newtheorem{ex}[thm]{Example}
\theoremstyle{plain}
\crefname{thm}{theorem}{theorems}
\Crefname{thm}{Theorem}{Theorems}
\crefname{defn}{definition}{definitions}
\Crefname{defn}{Definition}{Definitions}
\crefname{prop}{proposition}{propositions}
\Crefname{prop}{Proposition}{Propositions}
\crefname{lem}{lemma}{lemmas}
\Crefname{lem}{Lemma}{Lemmas}
\crefname{cor}{corollary}{corollaries}
\Crefname{cor}{Corollary}{Corollaries}
\crefname{ex}{example}{examples}
\Crefname{ex}{Example}{Examples}
\crefname{rem}{remark}{remarks}
\Crefname{rem}{Remark}{Remarks}
\crefname{hypot}{hypothesis}{hypotheses}
\Crefname{hypot}{Hypothesis}{Hypotheses}
\crefname{conj}{conjecture}{conjectures}
\Crefname{conj}{Conjecture}{Conjectures}
\newcommand{\subalign}[1]{%
    \vcenter{%
        \Let@ \restore@math@cr \default@tag
        \baselineskip\fontdimen10 \scriptfont\tw@
        \advance\baselineskip\fontdimen12 \scriptfont\tw@
        \lineskip\thr@@\fontdimen8 \scriptfont\thr@@
        \lineskiplimit\lineskip
        \ialign{\hfil$\m@th\scriptstyle##$&$\m@th\scriptstyle{}##$\hfil\crcr
            #1\crcr
        }%
    }%
}
\newcolumntype{M}[1]{>{\centering\arraybackslash}m{#1}}
    \newcommand{\F}{\mathbb{F}}                                         
    \newcommand{\Fq}{\F_{q}}                                            
    \newcommand{\Fqn}{\Fq^{n}}                                          
    \newcommand{\Fqx}{\Fq^{\times}}                                     
    \newcommand{\abs}[1]{\left\vert #1 \right\vert}                     
    \newcommand{\ceil}[1]{\left\lceil #1 \right\rceil}                  
    \newcommand{\floor}[1]{\left\lfloor #1 \right\rfloor}               
    \DeclareMathOperator{\rowspace}{rowsp}                              
    \newcommand{\Z}{\mathbb{Z}}                                         
    \newcommand{\homog}{\text{\normalfont hom}}                         
    \DeclareMathOperator{\solvdeg}{sd}                                  
    \DeclareMathOperator{\reg}{reg}                                     
    \newcommand{\topcomp}{\text{\normalfont top}}                       
    \DeclareMathOperator{\circulant}{circ}                              
    \DeclareMathOperator{\diag}{diag}                                   
    \DeclareMathOperator{\rank}{rank}                                   
    \newcommand{\degree}[1]{\deg \left( #1 \right)}                    
    \DeclareMathOperator{\LM}{LM}                                       
    \DeclareMathOperator{\gl}{GL}                                       
    \newcommand{\GL}[2]{\gl_{#1} \left( #2 \right)}                     
\newcommand{\Ciminion}{\texttt{Ciminion}\xspace}
\newcommand{\Ciminiontwo}{\texttt{Ciminion2}\xspace}
\newcommand{\Hydra}{\textsf{Hydra}\xspace}
\newcommand{\SageMath}{\texttt{SageMath}\xspace}
\newcommand{\OSCAR}{\texttt{OSCAR}\xspace}
\newcommand{\repository}{\url{https://github.com/sca-research/Groebner-Basis-Cryptanalysis-of-Ciminion-and-Hydra.git}}
\title{Gr\"obner Basis Cryptanalysis of Ciminion and Hydra}
\author{Matthias Johann Steiner \orcidlink{0000-0001-5206-6579}}
\authorrunning{M.~J.~Steiner}
\institute{Alpen-Adria-Universit\"at Klagenfurt, Klagenfurt am W\"orthersee, Austria \\ \email{mattsteiner@edu.aau.at}}
\begin{document}
    \maketitle

    \keywords{Gr\"obner basis \and Ciminion \and Hydra}
    \begin{abstract}
        \texttt{Ciminion} and \textsf{Hydra} are two recently introduced symmetric key Pseudo-Random Functions for Multi-Party Computation applications.
        For efficiency, both primitives utilize quadratic permutations at round level.
        Therefore, polynomial system solving-based attacks pose a serious threat to these primitives.
        For \texttt{Ciminion}, we construct a quadratic degree reverse lexicographic (DRL) Gr\"obner basis for the iterated polynomial model via linear transformations.
        With the Gr\"obner basis we can simplify cryptanalysis, as we no longer need to impose genericity assumptions to derive complexity estimates.
        For \textsf{Hydra}, with the help of a computer algebra program like \texttt{SageMath} we construct a DRL Gr\"obner basis for the iterated model via linear transformations and a linear change of coordinates.
        In the \textsf{Hydra} proposal it was claimed that $r_\mathcal{H} = 31$ rounds are sufficient to provide $128$ bits of security against Gr\"obner basis attacks for an ideal adversary with $\omega = 2$.
        However, via our \textsf{Hydra} Gr\"obner basis standard term order conversion to a lexicographic (LEX) Gr\"obner basis requires just $126$ bits with $\omega = 2$.
        Moreover, using a dedicated polynomial system solving technique up to $r_\mathcal{H} = 33$ rounds can be attacked below $128$ bits for an ideal adversary.
    \end{abstract}

    \section{Introduction}
    With secure \emph{Multi-Party Computation} (MPC) several parties can jointly compute a function without revealing their private inputs.
    In practice, data often has to be securely transferred from and to third parties before it can be used in MPC protocols.
    Also, one may need to store intermediate results securely in a database.
    For efficiency, it is recommended to perform encryption via an MPC-friendly \emph{Pseudo-Random Function} (PRF) \cite{CCS:GRRSS16} which utilizes a secret-shared symmetric key.
    Compared to classical bit-based PRFs their MPC counterparts have two novel design criteria:
    \begin{itemize}
        \item They must be native over large finite fields $\Fq$, typically $\log_2 \left( q \right) \geq 64$ and $q$ is prime.

        \item They must admit a low number of multiplications for evaluation.
    \end{itemize}
    Two such recently introduced MPC-friendly PRFs are \Ciminion \cite{EC:DGGK21} and \Hydra \cite{EC:GOSW23}.
    These constructions are based on the iteration of degree $2$ Feistel and Lai--Massey permutations.
    Due to the low degree at round level, both designs consider Gr\"obner basis attacks as one of the most threatening attack vectors.

    In this paper we perform dedicated Gr\"obner basis cryptanalysis for both designs.
    Let us quickly sketch the main steps of a Gr\"obner basis attack on a symmetric design:
    \begin{enumerate}
        \item Model the function with a system of polynomials.
        For \Ciminion and \Hydra we analyze an \emph{iterated} polynomial model $\mathcal{R} (\mathbf{x}_{i - 1}, \mathbf{y}) - \mathbf{x}_i = \mathbf{0}$, where $\mathcal{R}: \Fqn \to \Fqn$ is the round function, the $\mathbf{x}_i$'s are intermediate state variables with $\mathbf{x}_0$ the input state and $\mathbf{x}_r$ the output state of the function, and $\mathbf{y}$ the cryptographic secret variables.

        \item\label{Item: DRL step} Compute a Gr\"obner basis with respect to an efficient term order, e.g.\ the \emph{degree reverse lexicographic} (DRL) term order.

        \item\label{Item: LEX step} Construct a univariate polynomial, e.g.\ via term order conversion to the \emph{lexicographic} (LEX) term order or via the Eigenvalue Method.

        \item Factor the univariate polynomial.
    \end{enumerate}

    For \Ciminion and \Hydra the designers conjectured that Step \ref{Item: DRL step} dominates the complexity of a Gr\"obner basis attack.
    In addition, to estimate the complexity of this step it was assumed that the polynomial systems are \emph{regular} or \emph{semi-regular} \cite{Bardet-Complexity}.
    We on the other hand will construct DRL Gr\"obner bases for both designs via linear transformations and possibly a linear change of variables.
    Thus, we can directly proceed with Step \ref{Item: LEX step}.
    Typically, this step is also computationally cheaper than finding the DRL Gr\"obner basis for generic polynomial systems, which is the main reason why we yield better complexity estimations than the designers.

    \subsection{Our Results \& Relationship to Other Works}\label{Sec: Our Results}
    The research goals of this work can be summarized in three bullet points:
    \begin{itemize}
        \item Increase the understanding of polynomial models for MPC-friendly designs.

        \item Provide clean cryptanalysis which avoids strong genericity assumptions.

        \item Improve the complexity estimations of \Ciminion and \Hydra from a designer's point of view.
    \end{itemize}

    \paragraph{\Ciminion.}
    The PRF \Ciminion is an iterated Feistel design based on the Toffoli gate \cite{Toffoli-ReversibleComputing}.
    For the iterated \Ciminion polynomial model we construct a DRL Gr\"obner basis via linear transformations, see \Cref{Th: Ciminion Groebner basis}.
    Hence, we trivialize the hardness assumption of Step \ref{Item: DRL step}.

    It is worthwhile mentioning that in a recent work Bariant \cite{Bariant-Ciminion} constructed the univariate LEX polynomial of \Ciminion via iterated univariate polynomial multiplications.
    The complexities of constructing and factoring the LEX polynomial are linear-logarithmic in its degree.
    On the other hand, constructing the univariate LEX polynomial via Gr\"obner basis methods scales at least quadratic-logarithmic in its degree.
    Therefore, to the best of our knowledge Bariant's attack currently constitutes the most competitive attack on certain \Ciminion parameters.
    We also review this attack in more detail in \Cref{Sec: Bariant's attack}.

    There are two simple countermeasures against Bariant's attack: increasing the number of rounds or additional key additions.
    For the latter, we also discuss in \Cref{Sec: Bariant's attack} how key additions can be performed while maintaining the \Ciminion DRL Gr\"obner basis from \Cref{Th: Ciminion Groebner basis}.
    Additionally, in \Cref{Sec: Ciminion2} we propose \Ciminiontwo, a slightly modified variant which incorporates key additions that are compatible with \Cref{Th: Ciminion Groebner basis}.

    In \Cref{Sec: Ciminion cryptanalysis} we discuss Gr\"obner basis cryptanalysis of \Ciminion and \Ciminiontwo respectively.
    Due to our DRL Gr\"obner basis we yield a clean analysis as well as state-of-the-art complexity estimations for \Ciminion.
    In particular, we improve upon the designer's original analysis.
    In \Cref{Tab: Ciminion complexity summary} we present the minimal \Ciminion round numbers that achieve $128$ bits of security for an ideal adversary\footnote{In our setting an ideal adversary has access to a hypothetical matrix multiplication algorithm which achieves the linear algebra constant $\omega = 2$.} against the designer's estimation (``Fully Substituted Model''), Bariant's attack and our dedicated Eigenvalue Method for \Ciminion and \Ciminiontwo.

    \begin{table}[H]
        \centering
        \caption{Minimal number of rounds required for \Ciminion and \Ciminiontwo to achieve at least $128$ bits of security against various Gr\"obner basis attacks.
                 All computations are performed over the prime $q = 2^{127} + 45$ with $\omega = 2$.}
        \label{Tab: Ciminion complexity summary}
        \begin{tabular}{ c | M{25mm} | c || M{37mm} }
            \toprule
            & \multicolumn{3}{ c }{Complexity (bits)} \\
            \midrule
            $r_C + r_E$ & Bariant's Attack \cite{Bariant-Ciminion} & Eigenvalue Method & Fully Substituted Model \cite[\S 4.4]{EC:DGGK21} \\
            \midrule

            $33$  & $46.67$  & $63.09$  & $130$ \\
            $66$  & $81.22$  & $129.09$ & $262$ \\
            $112$ & $128.47$ & $221.09$ & $446$ \\

            \bottomrule
        \end{tabular}
    \end{table}

    \paragraph{\Hydra.}
    The heads of the \Hydra can be considered as iterated Lai--Massey ciphers, where the key as well as the input are hidden from the adversary and only the output is released.
    In a recent work Steiner \cite{ToSC:Steiner24} introduced the notion of polynomial systems in \emph{generic coordinates} \cite{Caminata-SolvingPolySystems} to cryptographic polynomial models.
    In essence, if a polynomial system is in generic coordinates, then we have a proven complexity for the computation of the DRL Gr\"obner basis \cite{Caminata-SolvingPolySystems}.
    In \cite[\S 6.2]{ToSC:Steiner24} the iterated polynomial models of generalized Feistel ciphers were analyzed for being in generic coordinates.
    For many instances of Feistel ciphers generic coordinates can be verified via the rank of an associated linear system.
    In \Cref{Th: Hydra generic coordinates test} we will construct such a linear system for the iterated \Hydra model.
    Hence, with the aid of a computer algebra program like \SageMath \cite{SageMath} we yield a computer-aided proof for the complexity of DRL Gr\"obner basis computations.
    Additionally, if the linear system for generic coordinates verification has full rank, then we can perform a linear change of coordinates to transform the \Hydra polynomial model into a DRL Gr\"obner basis, see \Cref{Sec: Extracting a Groebner Basis}.

    Via the change of coordinates we produce a quadratic DRL Gr\"obner basis in $2 \cdot r_\mathcal{H} - 2$ polynomials and variables together with four additional quadratic polynomials.
    This leaves us with two possible routes to recover the key:
    \begin{enumerate}[label=(\Alph*)]
        \item We can recompute the DRL Gr\"obner basis for $2 \cdot r_\mathcal{H} + 2$ quadratic polynomials.

        \item We ignore the additional equations and just use the DRL Gr\"obner basis to construct a univariate polynomial.
    \end{enumerate}

    The first approach was also considered by the \Hydra designers \cite[\S 7.4]{EC:GOSW23}, although without the linear transformation to the DRL Gr\"obner basis.
    Under the semi-regularity assumption \cite{Bardet-Complexity} the designers derived the \emph{degree of regularity}, which determines the complexity of the computation.
    In principle, we can apply the same strategy, but with one important benefit: After the linear transformation we only have to consider \emph{square-free} monomials in a DRL Gr\"obner basis computation.
    This effectively saves both time and space.
    In analogy to Gr\"obner basis computations over $\F_2$ we call this the \emph{Boolean Semi-Regular} Gr\"obner basis attack.

    Alternatively, we can simply restrict to the DRL Gr\"obner basis and proceed with term order conversion via the probabilistic FGLM algorithm \cite{Faugere-SubCubic}.
    It turns out that this approach invalidates the designers' claim that $r_\mathcal{H} = 31$ rounds cannot be attacked by an ideal adversary below $128$ bits.
    In addition, via a dedicated Eigenvalue Method an ideal adversary can attack up to $r_\mathcal{H} = 33$ rounds.

    In \Cref{Sec: Hydra cryptanalysis} we discuss the complexity estimations for \Hydra in more detail, a summary of the considered attacks is given in \Cref{Tab: Hydra summary}.
    Let $r_\mathcal{H}^\ast$ be the least round number which achieves $128$ bits of security against an ideal adversary.
    The \Hydra designers require for the head round number that $r_\mathcal{H} = \ceil{1.25 \cdot \max \left\{ 24, 2 + r_\mathcal{H}^\ast \right\}}$ \cite[\S 5.5]{EC:GOSW23}.
    In the original analysis the designers found that $r_\mathcal{H}^\ast = 29$ and $r_\mathcal{H} = 39$, however our analysis yields $r_\mathcal{H}^\ast = 34$ which implies the increase $r_\mathcal{H} = 45$ to maintain the originally intended security margin.

    \begin{table}[H]
        \centering
        \caption{Complexities of Gr\"obner basis attacks on \Hydra which beat the designers' analysis, and security margin reduction for full rounds.
                 All computations are performed over the prime $q = 2^{127} + 45$ with $\omega = 2$.}
        \label{Tab: Hydra summary}
        \begin{tabular}{ c | M{18mm} | M{16mm} | M{21mm} || M{21mm} }
            \toprule
            & \multicolumn{4}{ c }{Complexity (Bits)} \\
            \midrule
            $r_\mathcal{H}$ & Term Order Conversion & Eigenvalue Method & Boolean Semi-Regular Estimate & Semi-Regular Estimate \cite{EC:GOSW23} \\
            \midrule

            $29$ & $117.81$ & $111.09$ & $123.29$ & $130.80$ \\
            $31$ & $125.91$ & $119.09$ & $135.91$ & $141.77$ \\
            $33$ & $134.00$ & $127.09$ & $145.75$ & $152.75$ \\
            $39$ & $158.25$ & $151.09$ & $172.66$ & $182.22$ \\

            \bottomrule
        \end{tabular}
    \end{table}

    \paragraph{Implementation.}
    We implemented the \Ciminion and \Hydra polynomial systems and their DRL Gr\"obner bases in the computer algebra systems \SageMath \cite{SageMath} and \OSCAR \cite{OSCAR}.\footnote{\repository}
    In particular, for \Hydra we have implemented the linear system for the generic coordinates verification as well as the linear change of coordinates to produce a quadratic DRL Gr\"obner basis.
    We stress that our \SageMath implementations are able to handle practical parameters like $128$ bit prime numbers and full rounds \Hydra with $r_\mathcal{H} = 39$.

    \subsection{Organization of the Paper}
    In \Cref{Sec: Preliminaries} we introduce the technical requirements of this paper.
    \Ciminion and \Hydra are formally introduced in \Cref{Sec: Ciminion polynomial system,Sec: Hydra polynomial system} respectively.
    In \Cref{Sec: Groebner Bases} we quickly review Gr\"obner basis algorithms and the notion of generic coordinates.
    As preparation for \Hydra cryptanalysis, in \Cref{Sec: Boolean Macaulay matrices} we consider polynomial systems $\mathcal{F} \subset K [x_1, \dots, x_n]$ which contain a subset $\mathcal{F}_\text{Bool} = \{ g_1, \dots, g_n \} \subset \mathcal{F}$ with $\LM_{DRL} \left( g_i \right) = x_i^2$ for all $1 \leq i \leq n$.
    For such polynomial systems we can reduce Gaussian elimination on the Macaulay matrix $M_{\leq d}$ to a submatrix $M_{\leq d}^\text{Bool}$ which only represents the square-free monomials.
    The complexity of square-free Gaussian elimination is derived in \Cref{Th: Boolean Macaulay matrix}, and the one of the auxiliary division by remainder step is given in \Cref{Cor: Boolean Macaulay matrix construction complexity}.
    Finally, in \Cref{Sec: Eigenvalue Method} we develop a dedicated strategy, called the \emph{Eigenvalue Method}, to find a $\Fq$-valued key guess via the quadratic DRL Gr\"obner bases for \Ciminion and \Hydra.

    In \Cref{Sec: Analysis of Ciminion} we analyze the iterated \Ciminion polynomial system.
    In particular, in \Cref{Th: Ciminion Groebner basis} we construct a \Ciminion DRL Gr\"obner basis via linear transformations.
    Bariant's attack and compatibility of countermeasures with the DRL Gr\"obner basis are discussed in \Cref{Sec: Bariant's attack}.
    Finally, in \Cref{Sec: Ciminion cryptanalysis} we discuss Gr\"obner basis cryptanalysis of \Ciminion.

    In \Cref{Sec: Analysis of Hydra} we analyze the iterated polynomial system for the \Hydra heads.
    A linear system to verify generic coordinates for \Hydra is developed in \Cref{Th: Hydra generic coordinates test}.
    In case the linear system has full rank we discuss in \Cref{Sec: Extracting a Groebner Basis} that a linear change of coordinates yields a quadratic DRL Gr\"obner basis together with four additional quadratic polynomials.
    Note that we do not perform this change of coordinates by hand, instead we outsource it to our \SageMath or \OSCAR implementations.
    In \Cref{Sec: Hydra cryptanalysis} we finish this paper with Gr\"obner basis cryptanalysis for \Hydra.

    \section{Preliminaries}\label{Sec: Preliminaries}
    We denote fields with $K$ and their algebraic closure by $\overline{K}$.
    The finite field with $q$ elements is denoted as $\Fq$.
    If the field and the number of variables are clear from context, then we abbreviate the polynomial ring as $P = K [x_1, \dots, x_n]$.
    Matrices $\mathbf{M} \in K^{m \times n}$ are denoted with upper bold letters and vectors $\mathbf{v} \in K^n$ by lower bold letters.
    Matrix-vector and matrix-matrix products are written as $\mathbf{M} \mathbf{v}$.

    Let $f \in K [x_1, \dots, x_n]$ be a polynomial, and let $x_0$ be an additional variable, the homogenization of $f$ with respect to $x_0$ is defined as
    \begin{equation}
        f^\homog (x_0, \dots, x_n) = x_0^{\degree{f}} \cdot f \left( \frac{x_1}{x_0}, \dots, \frac{x_n}{x_0} \right) \in K [x_0, \dots, x_n].
    \end{equation}
    Analogously, we denote the homogenization of ideals $I^\homog = \left\{ f^\homog \mid f \in I \right\}$ and of finite systems of polynomials $\mathcal{F}^\homog = \left\{ f_1^\homog, \dots, f_m^\homog \right\}$.

    Let $f \in K [x_1, \dots, x_n]$ be a polynomial, we can always decompose it as sum $f = f_d + \ldots + f_0$, where $f_i$ is homogeneous of degree $i$.
    We call $f_d$ the homogeneous highest degree component of $f$ and denote it as $f^\topcomp$.
    Analogously, we denote $\mathcal{F}^\topcomp = \left\{ f_1^\topcomp, \dots, f_m^\topcomp \right\}$.
    Note that we have $f^\topcomp = f^\homog \mod \left( x_0 \right)$.

    Let $I \subset R$ be an ideal in a commutative ring, the radical of $I$ is defined as $\sqrt{I} = \left\{ f \in R \mid \exists n \in \Z_{\geq 1} \!: f^n \in I \right\}$.
    It is well-known that the radical is also an ideal.

    \subsection{\Ciminion}\label{Sec: Ciminion polynomial system}
    Let $\Fq$ be a finite field, the \Ciminion \cite{EC:DGGK21} PRF over $\Fq$ is most easily described via an illustration, see \Cref{Fig: Ciminion}.
    A nonce $\aleph \in \Fq$ and the first key pair $(K_1, K_2)$ are fed into the permutation $p_C: \Fq^3 \to \Fq^3$.
    Then the outputs are fed into the permutation $p_E: \Fq^3 \to \Fq^3$.
    The first two outputs of $p_E$ can then be added to messages $P_1, P_2 \in \Fq$ to encrypt them, and the third output remains private.
    If one wishes to encrypt an additional message pair $P_3, P_4 \in \Fq$ one feeds the outputs of $p_C$ after a key addition with $(K_3, K_4)$ into the rolling function $rol: \Fq^3 \to \Fq^3$ and then again into $p_E$.

    The permutations $p_C$ and $p_E$ are based on an iterated Feistel network
    \begin{equation}\label{Equ: Ciminion round function}
        \begin{split}
            \mathcal{R}^{(i)}: \Fq^3 &\to \Fq^3, \\
            \begin{pmatrix}
                x \\ y \\ z
            \end{pmatrix}
            &\mapsto
            \begin{pmatrix}
                0 & 0         & 1         \\
                1 & c_4^{(i)} & c_4^{(i)} \\
                0 & 1         & 1
            \end{pmatrix}
            \begin{pmatrix}
                x \\ y \\ z + x \cdot y
            \end{pmatrix}
            +
            \begin{pmatrix}
                c_1^{(i)} \\ c_2^{(i)} \\ c_3^{(i)}
            \end{pmatrix}
            ,
        \end{split}
    \end{equation}
    where $c_1^{(i)}, c_2^{(i)}, c_3^{(i)} \in \Fq$ and $c_4^{(i)} \in \Fq \setminus \{ 0, 1 \}$.
    Note that the Feistel network $(x, y, z)^\intercal \mapsto (x, y, z + x \cdot y)^\intercal$ is also known as Toffoli gate which plays a special role in reversible computing \cite{Toffoli-ReversibleComputing}.
    Finally, the rolling function is defined as
    \begin{equation}
        rol: \Fq^3 \to \Fq^3,
        \begin{pmatrix}
            x \\ y \\ z
        \end{pmatrix}
        \mapsto
        \begin{pmatrix}
            z + x \cdot y \\ x \\ y
        \end{pmatrix}
        .
    \end{equation}

    \clearpage 
    \begin{figure}[H]
        \centering
        \includegraphics[width=0.7\textwidth]{./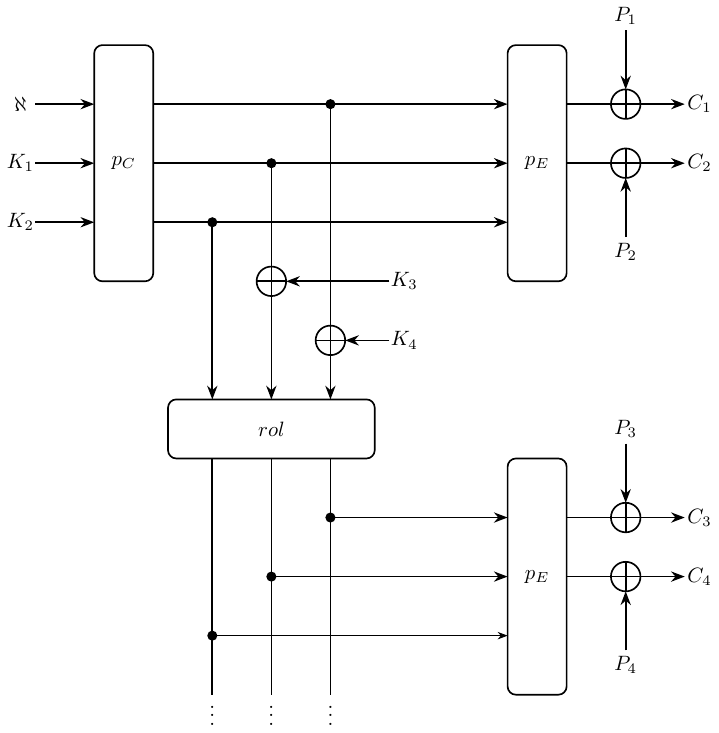}
        \caption{Encryption with \Ciminion.}
        \label{Fig: Ciminion}
    \end{figure}

    Formulas to compute the round numbers for \Ciminion are given in \cite[Table~1]{EC:DGGK21}. 
    E.g., for $s = 128$ and data limit $2^\frac{s}{2}$ we have the round numbers $r_C = 90$ and $r_E = 14$.

    It is now straight-forward to set up the iterated polynomial system for the first key pair $(K_1, K_2)$ given the first plain/ciphertext pair.
    For ease of notation we abbreviate $r = r_C + r_E$.
    \begin{defn}[Iterated polynomial system for \Ciminion]\label{Def: Ciminion polynomial model}
        Let $\Fq$ be a finite field, and let $\big( \aleph, (p_1, p_2), (c_1, c_2) \big) \in \Fq \times \Fq^2 \times\Fq^2$ be a nonce/plain/ciphertext sample given by a \Ciminion encryption function.
        We call the polynomial system $\mathcal{F}_\Ciminion = \left\{ \mathbf{f}^{(i)} \right\}_{1 \leq i \leq r} \subset \Fq \Big[ y_1, y_2, \mathbf{x}^{(1)}, \dots, \mathbf{x}^{(r - 1)}, x \Big]$ the iterated \Ciminion polynomial system

        \[
    	    \mathbf{f}^{(i)}
    	    =
    	    \begin{dcases}
    	        \mathcal{R}^{(1)} \left( \aleph, y_1, y_2 \right) - \mathbf{x}^{(1)}, & i = 1, \\
    	        \mathcal{R}^{(i)} \left( \mathbf{x}^{(i - 1)} \right) - \mathbf{x}^{(i)}, & 2 \leq i \leq r - 1, \\
    	        \mathcal{R}^{(r)} \left( \mathbf{x}^{(r - 1)} \right) -
    	        \begin{pmatrix}
    	            c_1 - p_1 \\
    	            c_2 - p_2 \\
    	            x
    	        \end{pmatrix}
    	        , & i = r.
    	    \end{dcases}
        \]
    \end{defn}

    \subsection{\Hydra}\label{Sec: Hydra polynomial system}
    Let $\Fq$ be a finite field, the \Hydra \cite{EC:GOSW23} PRF over $\Fq$ is also easily described via an illustration, see \Cref{Fig: Hydra}.
    A nonce $\aleph \in \Fq$ and an initial value $\texttt{IV} \in \Fq^3$ are added to a key $\texttt{K} \in \Fq^4$, and then fed into the keyed \emph{body} function $\mathcal{B}: \Fq^4 \times \Fq^4 \to \Fq^8$ which produces the state $\mathbf{y} || \mathbf{z} = (\mathbf{y}, \mathbf{z})^\intercal \in \Fq^4 \times \Fq^4 $.
    Then $\mathbf{y} || \mathbf{z}$ is fed into the first keyed \emph{head} function $\mathcal{H}_\texttt{K}: \Fq^8 \times \Fq^4 \to \Fq^8$.
    After adding $\mathbf{y} || \mathbf{z}$ to the output of the head the result is released as the first \Hydra sample.
    Say one has already requested $(i - 1)$ \Hydra samples, for an additional sample $\mathbf{y} || \mathbf{z}$ is fed into the $i$\textsuperscript{th} rolling function $\mathcal{R}_i: \Fq^8 \to \Fq^8$, and then again into the head $\mathcal{H}_\texttt{K}$.
    After adding the output of $\mathcal{R}_i$ to the output of $\mathcal{H}_\texttt{K}$ the result is released as $i$\textsuperscript{th} \Hydra sample.
    \begin{figure}[H]
        \centering
        \includegraphics[width=0.7\textwidth]{./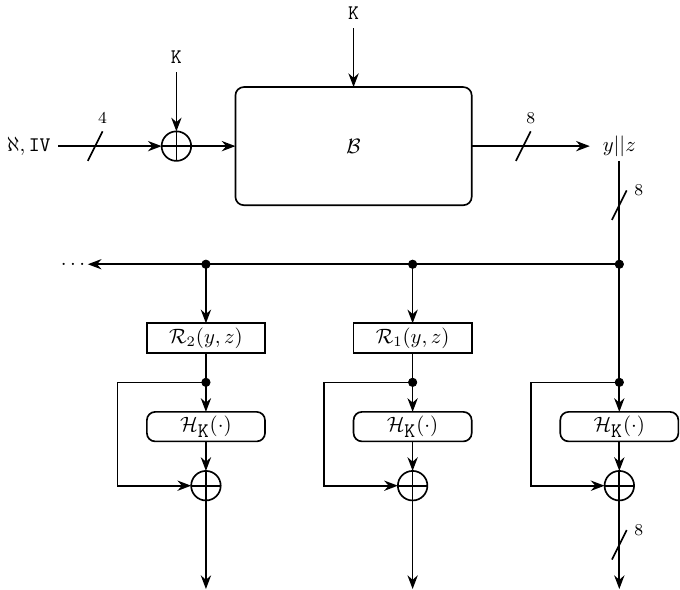}
        \caption{The \Hydra PRF with hidden body.}
        \label{Fig: Hydra}
    \end{figure}

    For our polynomial model we are only going to consider the heads of the \Hydra.
    Thus, we do not describe the details of the body function $\mathcal{B}$, interested readers can find the description in \cite[\S 5.2]{EC:GOSW23}.
    Let $\mathbf{M}_\mathcal{E} \in \GL{4}{\Fq}$ be an MDS matrix, e.g.\ the \Hydra designers recommend $\circulant \left( 2, 3, 1, 1 \right)$ or $\circulant \left( 3, 2, 1, 1 \right)$.\footnote{
        We consider circulant matrices with respect to a right shift, i.e.\
        \[
            \circulant (a_1, \dots, a_n) =
            \begin{pmatrix}
                a_1    & a_2    & \ldots  & a_{n - 1} & a_n       \\
                a_n    & a_1    & \ldots  & a_{n - 2} & a_{n - 1} \\
                \vdots & \vdots & \ddots  & \vdots    & \vdots    \\
                a_2    & a_3    & \ldots  & a_n       & a_1
            \end{pmatrix}
            .
        \]}
    For the keyed permutation $\mathcal{H}_\texttt{K}$, let
    \begin{align}
        \texttt{K}' &=
        \begin{pmatrix}
            \texttt{K} \\ \mathbf{M}_\mathcal{E} \texttt{K}
        \end{pmatrix}
        \in \Fq^8, \\
        f (x_1, \dots, x_n) &= \left( \sum_{i = 1}^{8} (-1)^{\floor{\frac{i - 1}{4}}} \cdot x_i \right)^2.
    \end{align}
    Moreover, let $\mathbf{M}_\mathcal{J} \in \GL{8}{\Fq}$ be an invertible matrix.
    The heads round function is defined as
    \begin{equation}
        \begin{split}
            \mathcal{J}_i : \Fq^8 \times \Fq^4 &\to \Fq^8, \\
            \left( \mathbf{x}, \texttt{K} \right) &\mapsto \mathbf{M}_\mathcal{J}
            \big( x_i + f (\mathbf{x}) \big)_{1 \leq i \leq 8} + \texttt{K}' + \mathbf{c}^{(i)},
        \end{split}
    \end{equation}
    where $\mathbf{c}^{(i)} \in \Fq^8$ is a constant.
    The heads function is now defined as the following $r_\mathcal{H}$-fold composition
    \begin{equation}
        \begin{split}
            \mathcal{H}_\texttt{K}: \Fq^4 \times \Fq^4 \times \Fq^4 & \to \Fq^8, \\
            (\mathbf{y}, \mathbf{z}, \texttt{K}) &\mapsto \mathcal{J}_{r_\mathcal{H}} \circ \cdots \circ \mathcal{J}_1 \left(
            \begin{pmatrix}
                \mathbf{y} \\ \mathbf{z}
            \end{pmatrix}
            , \texttt{K} \right).
        \end{split}
    \end{equation}
    Let $\mathbf{M}_\mathcal{I} \in \GL{4}{\Fq}$ be an invertible matrix of the form
    \begin{equation}
        \mathbf{M}_\mathcal{I} =
        \begin{pmatrix}
            \mu_{1, 1} & 1 & 1 & 1 \\
            \mu_{2, 1} & \mu_{2, 2} & 1 & 1 \\
            \mu_{3, 1} & 1 & \mu_{3, 3} & 1 \\
            \mu_{4, 1} & 1 & 1 & \mu_{4, 4}
        \end{pmatrix}
        .
    \end{equation}
    Note that the $\mu_{i, j} \in \Fq$ have to be chosen under additional constraints to prevent invariant subspaces \cite[\S 5.2]{EC:GOSW23}.
    Finally, for the rolling function let
    \begin{align}
        g (\mathbf{y}, \mathbf{z}) &= \left( \sum_{i = 1}^{4} (-1)^{i - 1} \cdot y_i \right) \cdot \left( \sum_{i = 1}^{4} (-1)^{\floor{\frac{i - 1}{2}}} \cdot z_i \right), \\
        \mathbf{M}_\mathcal{R} &= \diag \left( \mathbf{M}_\mathcal{I}, \mathbf{M}_\mathcal{I} \right).
    \end{align}
    Then, the rolling function is defined as
    \begin{equation}
        \begin{split}
            \mathcal{R}: \Fq^4 \times \Fq^4 &\to \Fq^8, \\
            \begin{pmatrix}
                \mathbf{y} \\ \mathbf{z}
            \end{pmatrix}
            &\mapsto \mathbf{M}_\mathcal{R}
            \begin{pmatrix}
                \big( y_i + g (\mathbf{y}, \mathbf{z}) \big)_{1 \leq i \leq 4} \\
                \big( z_i + g (\mathbf{z}, \mathbf{y}) \big)_{1 \leq i \leq 4}
            \end{pmatrix}
            ,
        \end{split}
    \end{equation}
    and the $i$\textsuperscript{th} rolling function is simply the composition $\mathcal{R}_i = \mathcal{R} \circ \mathcal{R}_{i - 1} + \mathbf{c}_\mathcal{R}^{(i)}$, where $\mathbf{c}_\mathcal{R}^{(i)} \in \Fq$ is a random constant.
    Note that the degree increasing parts in $\mathcal{J}_i$ and $\mathcal{R}$ are generalized Lai--Massey permutations \cite{ToSC:GOPS22,Steiner-GTDS}.

    For $\kappa$ bits of security, the number of rounds for the \Hydra heads with data limit $2^\frac{\kappa}{2}$ are computed via, see \cite[\S 5.5]{EC:GOSW23},
    \begin{equation}\label{Equ: head round numbers}
        r_\mathcal{H} = \ceil{1.25 \cdot \max \left\{ 24, 2 + r_\mathcal{H}^\ast \right\}},
    \end{equation}
    where $r_\mathcal{H}^\ast$ is the smallest integer that satisfies \cite[Equation~12]{EC:GOSW23} and \cite[Equation~15]{EPRINT:GOSW22}.
    Note that \cite[Equation~12]{EC:GOSW23} is derived from the Gr\"obner basis cryptanalysis of \Hydra.
    E.g., for $q = 2^{127} + 45$ and $\kappa = 128$ the $\Hydra$ designers propose $r_\mathcal{H} = 39$.

    Given two \Hydra outputs it is straight-forward to set up an overdetermined polynomial system for \Hydra.
    For ease of writing we denote the $i$\textsuperscript{th} intermediate state variables for the $j$\textsuperscript{th} output as $\mathbf{x}^{(i)}_j = \left( x_{j, 1}^{(i)}, \dots, x_{j, 8}^{(i)} \right)^\intercal$ and analogously for $\mathbf{y} = (y_1, \dots, y_4)^\intercal$, $\mathbf{z} = (z_1, \dots, z_4)^\intercal$ and $\mathbf{k} = (k_1, \dots, k_4)^\intercal$.
    \begin{defn}[Iterated polynomial system for \Hydra]\label[defn]{Def: Hydra polynomial system}
        Let $\Fq$ be a finite field, and let $\mathbf{c}_1, \mathbf{c}_2 \in \Fq^8$ be the first two outputs of a \Hydra function.
        We call the following polynomial system $\mathcal{F}_\Hydra = \left\{ \mathbf{f}_1^{(i)}, \mathbf{f}_\mathcal{R}, \mathbf{f}_2^{(i)} \right\}_{1 \leq i \leq r_\mathcal{H}} \subset \Fq \Big[ \mathbf{y}, \mathbf{z}, \mathbf{x}_1^{(1)}, \dots, \mathbf{x}_1^{(r_\mathcal{H} - 1)}, \mathbf{x}_2^{(0)}, \dots, \mathbf{x}_2^{(r_\mathcal{H} - 1)}, \mathbf{k} \Big]$ the iterated \Hydra polynomial system
        \begin{align}
            \mathbf{f}_1^{(i)} &=
            \begin{dcases}
                \mathcal{J}_1 \left(
                \begin{pmatrix}
                    \mathbf{y} \\ \mathbf{z}
                \end{pmatrix}
                , \mathbf{k}'
                \right) - \mathbf{x}_1^{(1)}, & i = 1, \\
                \mathcal{J}_i \left( \mathbf{x}_1^{(i - 1)}, \mathbf{k}' \right) - \mathbf{x}_1^{(i)}, & 2 \leq i \leq r_\mathcal{H} - 1, \\
                \mathcal{J}_{r_\mathcal{H}} \left( \mathbf{x}_1^{(r_\mathcal{H} - 1)}, \mathbf{k}' \right) +
                \begin{pmatrix}
                    \mathbf{y} \\ \mathbf{z}
                \end{pmatrix}
                - \mathbf{c}_1, & i = r_\mathcal{H},
            \end{dcases}
            \nonumber \\
            \mathbf{f}_\mathcal{R} &= \mathcal{R} \left(
            \begin{pmatrix}
                \mathbf{y} \\ \mathbf{z}
            \end{pmatrix}
            \right) + \mathbf{c}_\mathcal{R}^{(1)} - \mathbf{x}_2^{(0)},  \nonumber \\
            \mathbf{f}_2^{(i)} &=
            \begin{dcases}
                \mathcal{J}_1 \left( \mathbf{x}_2^{(0)}, \mathbf{k}' \right) - \mathbf{x}_2^{(1)}, & i = 1, \\
                \mathcal{J}_i \left( \mathbf{x}_2^{(i - 1)}, \mathbf{k}' \right) - \mathbf{x}_2^{(i)}, & 2 \leq i \leq r_\mathcal{H} - 1, \\
                \mathcal{J}_{r_\mathcal{H}} \left( \mathbf{x}_2^{(r_\mathcal{H} - 1)}, \mathbf{k}' \right) + \mathbf{x}_2^{(0)} - \mathbf{c}_2, & i = r_\mathcal{H},
            \end{dcases}
            \nonumber
        \end{align}
        where $\mathbf{k}' =
        \begin{pmatrix}
            \mathbf{k} \\ \mathbf{M}_\mathcal{E} \mathbf{k}
        \end{pmatrix}
        .$
    \end{defn}
    This polynomial system consists of $16 \cdot r_\mathcal{H} + 8$ many polynomials in $16 \cdot r_\mathcal{H} + 4$ many variables.
    It can also easily be extended to an arbitrary number of output samples.

    \subsection{Gr\"obner Bases}\label{Sec: Groebner Bases}
    Gr\"obner bases are a fundamental tool in computational algebra.
    They were first introduced by Bruno Buchberger in his PhD thesis \cite{Buchberger}.
    For their definition one needs to induce an order on the monomials of a polynomial ring $P = K [x_1, \dots, x_n]$.
    First notice that a monomial $m = \prod_{i = 1}^{n} x_i^{a_i} \in P$ can be identified with its exponent vector $\mathbf{a} = \left( a_1, \dots, a_n \right)^\intercal \in \mathbb{Z}_{\geq 0}^{n}$.
    Therefore, a term order $>$ on $P$ is a binary relation on $\mathbb{Z}_{\geq 0}^n$ such that, see \cite[Chapter~2~\S 2~Definition~1]{Cox-Ideals}:
    \begin{enumerate}[label=(\roman*)]
        \item $>$ is a total ordering in $\mathbb{Z}_{\geq 0}^n$.

        \item If $\mathbf{a} > \mathbf{b}$ and $\mathbf{c} \in \mathbb{Z}_{\geq 0}^n$, then $\mathbf{a} + \mathbf{c} > \mathbf{b} + \mathbf{c}$.

        \item Every non-empty subset of $\mathbb{Z}_{\geq 0}^n$ has a smallest element under $>$, i.e.\ $>$ is a well-ordering on $\mathbb{Z}_{\geq 0}^n$.
    \end{enumerate}

    Now let $I \subset P = K [x_1, \dots, x_n]$ be an ideal, and let $>$ be a term order on $P$.
    For $f \in P$ we denote with $\LM_> \left( f \right)$ the leading monomial of $f$, i.e.\ the largest monomial with non-zero coefficient in $f$ under $>$.
    A finite ideal basis $\mathcal{G} = \{ g_1, \dots, g_m \} \subset I$ such that
    \begin{equation}
        \big( \LM_> (g_1), \dots, \LM_> (g_m) \big) = \big( \LM_> (f) \; \big\vert \; f \in I \big)
    \end{equation}
    is called a \emph{$>$-Gr\"obner basis} of $I$.
    I.e., the leading monomials of $\mathcal{G}$ are minimal under $>$.

    We abbreviate the ideal of leading terms of $I$ as $\LM_> (I) = \big( \LM_> (f) \; \big\vert \; f \in I \big)$.
    Also, for ease of writing we often denote the variables of $P$ in vector form $\mathbf{x} = (x_1, \dots, x_n)^\intercal$.
    If not specified otherwise, then we always assume that the variables within a variable vector $\mathbf{x}$ are naturally ordered with respect to $>$, i.e.\ $x_1 > \ldots > x_n$.

    With Gr\"obner bases one can solve many important computational problems like the membership problem or the computation of the set of zeros of an ideal.
    For a general introduction to Gr\"obner bases we refer to \cite{Kreuzer-CompAlg1,Kreuzer-CompAlg2,Cox-Ideals}.

    In this paper we work with the lexicographic (LEX) and the degree reverse lexicographic (DRL) term order.
    For $\mathbf{a}, \mathbf{b} \in \mathbb{Z}_{\geq 0}^n$, we have that $\mathbf{a} >_{LEX} \mathbf{b}$ if the first non-zero entry of $\mathbf{a} - \mathbf{b}$ is positive.
    We have that $\mathbf{a} >_{DRL} \mathbf{b}$ if either $\sum_{i = 1}^{n} a_i > \sum_{i = 1}^{n} b_i$ or $\sum_{i = 1}^{n} a_i = \sum_{i = 1}^{n} b_i$ and the last non-zero entry of $\mathbf{a} - \mathbf{b}$ is negative.

    In more generality, a term order $>$ on $P$ is called \emph{degree compatible} if for any $f, g \in P$ the degree inequality $\degree{f} > \degree{g}$ implies that $f > g$.
    Obviously, the DRL term order is degree compatible.

    \subsubsection{Linear Algebra-Based Gr\"obner Basis Algorithms}
    Gr\"obner bases can be constructed via Gaussian elimination on the so-called \emph{Macaulay matrices}.
    Let $\mathcal{F} \subset P = K [ x_1, \dots, x_n ]$ be homogeneous, and let $>$ be a term order on $P$.
    The \emph{homogeneous Macaulay matrix} $M_d$ for $\mathcal{F}$ has columns indexed by the monomials of degree $d$ in $P$ sorted via $>$, and its rows are indexed by $s \cdot f$, where $f \in \mathcal{F}$ and $s \in P$ is a monomial such that $\degree{s \cdot f} = d$.
    The entry of row $s \cdot f$ at column $t$ is the coefficient of $t$ in the polynomial $s \cdot f$.
    If $\mathcal{F}$ is inhomogeneous, then one replaces the degree equalities by inequalities to obtain the \emph{inhomogeneous Macaulay matrix} $M_{\leq d}$.
    Obviously, if we perform Gaussian elimination on $M_0, \dots, M_d$ (respectively $M_{\leq d}$) for $d$ large enough, then we construct a $>$-Gr\"obner basis for $\mathcal{F}$.
    This idea can be traced back to Lazard \cite{Lazard-Groebner}.
    The least $d$ which yields a Gr\"obner basis is of special interest.
    \begin{defn}[{Solving degree, \cite[Definition~6]{Caminata-SolvingPolySystems}}]\label[defn]{Def: solving degree}
        Let $\mathcal{F} = \{ f_1, \dots, f_m \} \subset K [x_1, \dots, \allowbreak x_n]$ and let $>$ be a term order.
        The solving degree of $\mathcal{F}$ is the least degree $d$ such that Gaussian elimination on the Macaulay matrix $M_{\leq d}$ produces a Gr\"obner basis of $\mathcal{F}$ with respect to $>$. We denote it by $\solvdeg_> (\mathcal{F})$.

        If $\mathcal{F}$ is homogeneous, then we consider the homogeneous Macaulay matrix $M_d$ and let the solving degree of $\mathcal{F}$ be the least degree $d$ such that Gaussian elimination on $M_0, \dots, M_d$ produces a Gr\"obner basis of $\mathcal{F}$ with respect to $>$.
    \end{defn}

    Algorithms which perform Gaussian elimination on the Macaulay matrices are generally referred to as \emph{linear algebra-based Gr\"obner basis algorithms}.
    Modern examples of such algorithms are Faug\`{e}re's F4 \cite{Faugere-F4} and Matrix-F5 \cite{Faugere-F5} algorithms.
    Since the solving degree is a priori unknown for most polynomial systems, F4/5 construct the Macaulay matrices for increasing values of $d$.
    Therefore, they require a stopping criterion to decide whether a Gr\"obner basis has already been found.
    On the other hand, we can always stop F4/5 after the final Gaussian elimination in the solving degree, since the row space basis must already contain a Gr\"obner basis.
    For this reason, we also consider the solving degree as complexity measure for advanced linear algebra-based algorithms.

    \subsubsection{Polynomial Systems in Generic Coordinates}
    Polynomial systems in generic coordinates have a proven DRL solving degree upper bound.
    For their definition we recall the \emph{degree of regularity} of a polynomial system.
    \begin{defn}[{Degree of regularity, \cite[Definition~4]{Bardet-Complexity}}]\label[defn]{Def: degree of regularity}
        Let $K$ be a field, and let $\mathcal{F} \subset K [x_1, \dots, x_n]$.
        If $\dim \left( \mathcal{F}^\topcomp \right) = 0$, then the degree of regularity is defined as
        \[
            d_{\reg} \left( \mathcal{F} \right) = \min \left\{ d \geq 0 \; \middle\vert \; \dim_{\Fq} \Big( \big\{ f \in \left( \mathcal{F}^\topcomp \right) \; \big\vert \; \degree{f} = d \big\} \Big) = \binom{n + d - 1}{d} \right\}.
        \]
        If $\dim \left( \mathcal{F}^\topcomp \right) > 0$, then $d_{\reg} \left( \mathcal{F} \right) = \infty$.
    \end{defn}

    We present an equivalent definition for generic coordinates which requires slightly less knowledge about commutative algebra and projective geometry than the original definition from \cite[Definition~5]{Caminata-SolvingPolySystems,Caminata-SolvingPolySystemsPreprint}.
    \begin{defn}[{cf.\ \cite[Definition~5]{Caminata-SolvingPolySystems,Caminata-SolvingPolySystemsPreprint}, \cite[Theorem~3.2]{ToSC:Steiner24}}]\label[defn]{Def: generic coordinates}
        Let $K$ be an infinite field, and let $\mathcal{F} \subset P = K [x_1, \dots, x_n]$.
        We say that $\left( \mathcal{F}^\homog \right) \subset P [x_0]$ is in generic coordinates if either $d_{\reg} \left( \mathcal{F} \right) < \infty$ or $\sqrt{\mathcal{F}^\homog} = (x_0, \dots, x_n)$.

        Let $K$ be any field, and let $K \subset L$ be an infinite field extension.
        $I$ is in generic coordinates over $K$ if $I \otimes_K L [x_0, \dots, x_n] \subset L [x_0, \dots, x_n]$ is in generic coordinates.
    \end{defn}

    Let $\mathcal{F} \subset K [x_1, \dots, x_n]$ be an inhomogeneous polynomial system, to prove that $\left( \mathcal{F}^\homog \right)$ is in generic coordinates one can use the following procedure based on \cite[Theorem~3.2]{ToSC:Steiner24}:
    \begin{enumerate}
        \item Compute $\left( \mathcal{F}^\topcomp \right) = \left( \mathcal{F}^\homog \right) \mod (x_0)$.

        \item\label{Item: Step 2} Construct a monomial $x_i^d \in \left( \mathcal{F}^\topcomp \right)$ for some $1 \leq i \leq n$ and $d \geq 1$.

        \item Replace $\left( \mathcal{F}^\topcomp \right) = \left( \mathcal{F}^\topcomp \right) \mod (x_i)$ and return to Step \ref{Item: Step 2}.
    \end{enumerate}
    If Step \ref{Item: Step 2} is successful for all $1 \leq i \leq n$, then the polynomial system $\mathcal{F}^\homog$ is in generic coordinates.

    As already mentioned, polynomial systems in generic coordinates have a proven DRL solving degree upper bound.
    \begin{thm}[{\cite[Theorem~9,~10,~Corollary~2]{Caminata-SolvingPolySystems}}]\label{Th: solvdeg and CM-regularity}
        Let $K$ be an algebraically closed field, and let $\mathcal{F} = \{ f_1, \dots, f_m \} \subset K [x_1, \dots, x_n]$ be an inhomogeneous polynomial system such that $\left( \mathcal{F}^\homog \right)$ is in generic coordinates.
        Let $d_i = \deg \left( f_i \right)$ and assume that $d_1 \geq \ldots \geq d_m$, then for $l \in \min \{ n + 1, m \}$
        \[
            \solvdeg_{DRL} \left( \mathcal{F} \right) \leq d_1 + \ldots + d_l - l + 1.
        \]
    \end{thm}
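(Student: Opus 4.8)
The plan is to reduce the inhomogeneous statement to one about the homogenized ideal and then chain four bounds: $\solvdeg_{DRL}(\mathcal{F}) \le \solvdeg_{DRL}(\mathcal{F}^\homog)$, then $\solvdeg_{DRL}(\mathcal{F}^\homog) = \maxGBdeg_{DRL}(I)$ where $I := (\mathcal{F}^\homog)$, then $\maxGBdeg_{DRL}(I) \le \reg(I)$, and finally $\reg(I) \le d_1 + \dots + d_l - l + 1$. The first inequality is exactly the homogenization bound recorded just above the statement, so from now on I work with $\mathcal{G} := \mathcal{F}^\homog \subset K[x_0, \dots, x_n]$, which generates the homogeneous ideal $I$ and whose elements have degrees $d_1 \ge \dots \ge d_m$.

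For the equality $\solvdeg_{DRL}(\mathcal{G}) = \maxGBdeg_{DRL}(I)$ I would argue as follows. Since $I$ is homogeneous and generated by $\mathcal{G}$, Gaussian elimination on the homogeneous Macaulay matrices $M_0, \dots, M_D$ produces, in each degree $d \le D$, a row-echelon $K$-basis of $I_d$; hence the leading monomials it exposes are exactly $\{ \LM_{DRL}(g) \mid 0 \ne g \in I,\ \degree{g} \le D \}$, and these generate $\inid_{DRL}(I)$ if and only if $D$ is at least the largest degree of a minimal generator of $\inid_{DRL}(I)$, which I denote $\maxGBdeg_{DRL}(I)$. Thus Gaussian elimination up to degree $D$ yields a DRL Gr\"obner basis precisely when $D \ge \maxGBdeg_{DRL}(I)$, which is the claimed equality.

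The core of the proof is the inequality $\maxGBdeg_{DRL}(I) \le \reg(I)$, and this is exactly where the generic-coordinates hypothesis is used. By definition it says that $x_0$ — the smallest variable under DRL — is a nonzerodivisor on $P/I^\satur$; equivalently, $x_0$ avoids every associated prime of $I$ other than the irrelevant one, i.e.\ $x_0$ is a filter-regular element for $I$. Under this hypothesis the reverse-lexicographic role of $x_0$ is generic enough that a Bayer--Stillman-type argument applies and gives $\reg(\inid_{DRL}(I)) = \reg(I)$; combined with the elementary fact that the regularity of a monomial ideal dominates the degrees of its minimal generators, this yields $\maxGBdeg_{DRL}(I) \le \reg(\inid_{DRL}(I)) = \reg(I)$. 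In positive characteristic one replaces ``Borel-fixed'' by ``weakly stable'' throughout without changing the conclusion, which is where algebraic closedness — hence infiniteness of $K$ — is convenient.

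Finally I would bound $\reg(I)$ by the Macaulay bound. Because $\abs{\mathcal{Z}_+(I)} < \infty$, the ring $K[x_0, \dots, x_n]/I$ has Krull dimension at most $1$, so the codimension $c$ of $I$ satisfies $n \le c \le n + 1$, and by Krull's height theorem $c \le m$; hence $c \le l = \min\{n + 1, m\}$. Filter-regularity of $x_0$ legitimizes the hyperplane-section induction on $\dim(K[x_0, \dots, x_n]/I)$ underlying the characteristic-free Macaulay / Caviglia--Sbarra-type regularity estimate and reduces it to the complete-intersection case, where the regularity of an ideal generated by a regular sequence of degrees $d_1, \dots, d_c$ is $d_1 + \dots + d_c - c + 1$; this gives $\reg(I) \le d_1 + \dots + d_c - c + 1 \le d_1 + \dots + d_l - l + 1$, the last step because each $d_i \ge 1$, and composing the four bounds proves the theorem. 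I expect the main obstacle to be precisely these two regularity steps: being in generic coordinates is phrased as a single condition on the DRL initial ideal with respect to $x_0$ rather than as a bona fide generic linear change of coordinates, so one must check that it still supplies enough genericity both for the Bayer--Stillman argument and for the hyperplane-section induction, and do so without appealing to characteristic zero.
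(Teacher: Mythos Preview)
The paper does not prove this theorem; it is quoted verbatim from Caminata--Gorla \cite[Theorems~9,~10, Corollary~2]{Caminata-SolvingPolySystems} and stated without proof. So there is no ``paper's own proof'' to compare against.

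That said, your outline is the correct one and matches the argument in the cited source: pass to the homogenization (their Theorem~7, quoted in the paper just before the statement), identify the homogeneous solving degree with the maximal degree in a reduced DRL Gr\"obner basis, bound that by the Castelnuovo--Mumford regularity via a Bayer--Stillman-type argument using that $x_0$ is a nonzerodivisor on $P/I^{\satur}$ (this is Caminata--Gorla's Theorem~9), and finally bound $\reg(I)$ by the Macaulay bound using that $\dim(P/I)\le 1$ (their Theorem~10 / Corollary~2). Your caveat at the end is well placed: the step $\maxGBdeg_{DRL}(I)\le\reg(I)$ really does require the generic-coordinates hypothesis on the last DRL variable, and Caminata--Gorla verify precisely that this single nonzerodivisor condition suffices for the Bayer--Stillman conclusion in arbitrary characteristic. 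The final regularity bound is not reduced to a complete-intersection computation as you sketch; in the source it is obtained directly from known regularity bounds for ideals of projective dimension at most one (e.g.\ results of Chardin and collaborators), so you may want to tighten that last paragraph rather than invoke a hyperplane-section induction you would still have to justify.
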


    In the Gr\"obner basis and algebraic cryptanalysis literature, the latter inequality is often referred to as the \emph{Macaulay bound}.

    \subsubsection{Complexity Estimates via the Solving Degree}
    It is well-known that the complexity of Gaussian elimination of a matrix $\mathbf{A} \in K^{n \times m}$ with $r = \rank \left( \mathbf{A} \right)$ is given by $\mathcal{O} \left( n \cdot m \cdot r^{\omega - 2} \right)$ field operations \cite[\S 2.2]{Storjohann-Matrix}, where $2 \leq \omega < 2.371552$ is a linear algebra constant \cite{SODA:WXXZ24}.

    Given an inhomogeneous polynomial system $\mathcal{F} \subset P = K [x_1, \dots, x_n]$, it is well-known that the number of monomials in $P$ of degree $d$ is $\binom{n + d - 1}{d}$.
    Thus, the number of columns of $M_{\leq d}$ is bounded by $\sum_{i = 0}^{d} \binom{n + i - 1}{i} = \binom{n + d}{d}$, and the number of rows by $\abs{\mathcal{F}} \cdot \binom{n + d}{d}$.
    So, the complexity of Gaussian elimination on the inhomogeneous Macaulay matrices $M_{\leq d}$ is bounded by
    \begin{equation}\label{Equ: Groebner basis complexity}
        \mathcal{O} \left( \abs{\mathcal{F}} \cdot \binom{n + d}{d}^\omega \right).
    \end{equation}

    \subsection{Boolean Macaulay Matrices}\label{Sec: Boolean Macaulay matrices}
    Let $\mathcal{F} \subset \F_2 [x_1, \dots, x_n]$ be a polynomial system.
    If we are only interested in $\F_2$-valued solutions, then it is sufficient to compute a Gr\"obner basis of the ideal $\big( \mathcal{F}, x_1^2 - x_1, \dots, x_n^2 - x_n \big)$.
    On the other hand, via the field equations $F = \left( x_1^2 - x_1, \dots, x_n^2 - x_n \right)$ we can reduce any monomial in the Macaulay matrix to a square-free monomial.
    Effectively, this reduces the full Macaulay matrix to a square-free submatrix, the so-called \emph{Boolean Macaulay matrix}.
    Gaussian elimination for Boolean Macaulay matrices with respect to DRL was analyzed in \cite{Bardet-Boolean} by Bardet et al.

    Interestingly, for \Ciminion and \Hydra we will construct zero-dimensional DRL Gr\"obner bases in \Cref{Sec: Ciminion Groebner basis,Sec: Extracting a Groebner Basis} with leading monomials  $x_i^2$.
    Hence, for Gr\"obner basis computations we can use these bases to reduce the Macaulay matrix to a ``Boolean'' one over an arbitrary finite field $\Fq$.
    This motivates the following definitions.
    \begin{defn}
        Let $K$ be a field, let $\mathcal{F} \subset P = K [x_1, \dots, x_n]$, and let $>$ be a degree compatible term order on $P$.
        $\mathcal{F}$ is said to be a $K$-Boolean polynomial system if
        \begin{enumerate}[label=(\roman*)]
            \item for every $1 \leq i \leq n$ there exists $f \in \mathcal{F}$ such that $\LM_> \left( f \right) = x_i^2$, and

            \item\label{Item: Groebner basis} $\mathcal{F}$ is a $>$-Gr\"obner basis.
        \end{enumerate}
    \end{defn}

    Note that Condition \ref{Item: Groebner basis} is introduced so that the remainder of multivariate polynomial division does not depend on the ordering of the elements of $\mathcal{F}$, see \cite[Chapter~2~\S 6~Proposition~1]{Cox-Ideals}.
    \begin{defn}[{cf.\ \cite[Definition~1]{Bardet-Boolean}}]
        Let $K$ be a field, let $\mathcal{F}_\text{Bool}, \mathcal{F} \subset P = K [x_1, \dots, x_n]$ be polynomial systems, and let $>$ be a degree compatible term order on $P$.
        Assume that $\mathcal{F}_\text{Bool}$ is $K$-Boolean with respect to $>$.
        For the (inhomogeneous) $K$-Boolean Macaulay matrix $M_{\leq d}^\text{Bool}$ of $\mathcal{F} \cup \mathcal{F}_\text{Bool}$:
        \begin{itemize}
            \item Columns are indexed by square-free monomials $s \in P \setminus \LM_> \left( \mathcal{F}_\text{Bool} \right)$ with $\degree{m} \leq d$ sorted via $>$.

            \item Rows are indexed by polynomials $t \cdot f$ where $t \in P \setminus \LM_> \left( \mathcal{F}_\text{Bool} \right)$ is a square-free monomial and $f \in \mathcal{F}$ such that $\degree{t \cdot f} \leq d$.

            \item In the row $t \cdot f$ one writes the coefficients of the remainder $r_{t \cdot f} = t \cdot f \mod \left( \mathcal{F}_\text{Bool} \right)$ with respect to $>$.
        \end{itemize}
    \end{defn}

    Note that any monomial $t \in P$ can be reduced to a square-free polynomial $\hat{t} \equiv t \mod \left( \mathcal{F}_\text{Bool} \right)$ with $\degree{\hat{t}} \leq \degree{t}$ due to degree compatibility of $>$.
    Since remainders with respect to a Gr\"obner basis are unique, see \cite[Chapter~2~\S 6~Proposition~1]{Cox-Ideals}, for the remainder of $t \cdot f$ it does not matter whether we reduce $t$ first and then $\hat{t} \cdot f$ or $t \cdot f$ at once.
    So it is indeed sufficient to consider only products with square-free monomials.

    Obviously, the remainder $r_{t \cdot f} = t \cdot f \mod \left( \mathcal{F}_\text{Bool} \right)$ is a polynomial consisting of only square-free terms.
    Moreover, by elementary properties of multivariate polynomial division with respect to a degree compatible term order, see \cite[Chapter~2~\S 3~Theorem~3]{Cox-Ideals}, we have that
    \begin{equation}
        t \cdot f = \sum_{i = 1}^{m} g_i \cdot f_i^\text{Bool} + r_{t \cdot f},
    \end{equation}
    where $\degree{g_i \cdot f_i^\text{Bool}}, \degree{r_{t \cdot f}} \leq \degree{t \cdot f}$.
    I.e., the $K$-Boolean Macaulay matrix can be constructed via row operations from the Macaulay matrix $M_{\leq d}$ of $\mathcal{F} \cup \mathcal{F}_\text{Bool}$.
    Alternatively, we can obtain the $K$-Boolean Macaulay matrix by pivoting non-square-free monomials in $M_{\leq d}$ via rows coming from $\mathcal{F}_\text{Bool}$, and then performing an auxiliary Gaussian elimination step that stops after eliminating all non-square-free monomials in the rows coming from $\mathcal{F}$.

    Let $M_{\leq d}$ be a Macaulay matrix, we denote with $\rowspace \left( M_{\leq d} \right)$ the row space basis, i.e.\ the polynomial system obtained via Gaussian elimination on $M_{\leq d}$.
    Analogously, we use this notation for Boolean Macaulay matrices.
    \begin{defn}
        Let $K$ be a field, let $\mathcal{F}_\text{Bool}, \mathcal{F} \subset P = K [x_1, \dots, \allowbreak x_n]$ be polynomial systems, and let $>$ be a degree compatible term order on $P$.
        Assume that $\mathcal{F}_\text{Bool}$ is $K$-Boolean with respect to $>$.
        The $K$-Boolean solving degree of $\mathcal{F} \cup \mathcal{F}_\text{Bool}$ is the least degree $d$ such that $\rowspace \left( M_{\leq d}^\text{Bool} \right) \cup \mathcal{F}_\text{Bool}$ contains a Gr\"obner basis with respect to $>$.
        We denote it by $\solvdeg^\text{Bool}_> \left( \mathcal{F} \cup \mathcal{F}_\text{Bool} \right)$.
    \end{defn}

    Note that $\rowspace \left( M_{\leq d}^\text{Bool} \right)$ might not contain a zero-dimensional Gr\"obner basis of the full system.
    A trivial example is $\mathcal{F} = \left\{ x_1^2, x_1 \cdot x_2, x_2^2 \right\}$.
    Therefore, to extract the Gr\"obner basis from $\rowspace \left( M_{\leq d}^\text{Bool} \right) \cup \mathcal{F}_\text{Bool}$ we set up the ideal of $>$-leading terms and then compute its minimal generating set.
    With the minimal generating set we can then filter the $>$-Gr\"obner basis.\footnote{
        It is obvious that all leading terms of $\rowspace \left( M_{\leq d} \right)$ not coming from $\rowspace \left( M_{\leq d}^\text{Bool} \right)$ are already contained in $\LM_{>} \left( \mathcal{F}_\text{Bool} \right)$.
        Since $\mathcal{F}_\text{Bool}$ is a $>$-Gr\"obner basis it is indeed sufficient to just consider $\rowspace \left( M_{\leq d}^\text{Bool} \right) \cup \mathcal{F}_\text{Bool}$.
    }

    The next theorem follows straight-forward from $M_{\leq d}^\text{Bool} \subset M_{\leq d}$ (as $K$-vector spaces).
    \begin{thm}\label{Th: Boolean Macaulay matrix}
        Let $K$ be a field, let $\mathcal{F}_\text{Bool}, \mathcal{F} \subset P = K [x_1, \dots, x_n]$ be polynomial systems, and let $>$ be a degree compatible term order on $P$.
        Let $D  = \min_{f \in \mathcal{F}} \degree{f}$, and let $d = \solvdeg^\text{Bool}_> \left( \mathcal{F} \cup \mathcal{F}_\text{Bool} \right)$.
        Assume that $\mathcal{F}_\text{Bool}$ is $K$-Boolean with respect to $>$.
        Then
        \begin{enumerate}
            \item $\solvdeg^\text{Bool}_> \left( \mathcal{F} \cup \mathcal{F}_\text{Bool} \right) \leq \solvdeg_> \left( \mathcal{F} \cup \mathcal{F}_\text{Bool} \right)$.

            \item If $d - D < n$, then Gaussian elimination on $M_{\leq d}^\text{Bool}$ requires
            \[
            	\mathcal{O} \left( \abs{\mathcal{F}} \cdot \left( \sum_{i = 0}^{d - D} \binom{n}{i} \right) \cdot \left( \sum_{i = 0}^{d} \binom{n}{i} \right)^{\omega - 1} \right)
            \]
            field operations.

            \item If $d - D \geq n$, then Gaussian elimination on $M_{\leq d}^\text{Bool}$ requires at most
            \[
            	\mathcal{O} \left( \abs{\mathcal{F}} \cdot 2^{\omega \cdot n} \right)
            \]
            field operations.
        \end{enumerate}
    \end{thm}
    \begin{proof}
        For (1), in $M_{\leq d}$ we pivot non-square-free columns via rows coming from $\mathcal{F}_\text{Bool}$, then we perform the auxiliary Gaussian elimination step that produces $M_{\leq d}^\text{Bool}$ as submatrix, and finally we perform Gaussian elimination on $M_{\leq d}^\text{Bool}$.
        Since Gaussian elimination does not depend on the order of reductions $\rowspace \left( M_{\leq d}^\text{Bool} \right) \cup \mathcal{F}_\text{Bool}$ has to contain the $>$-Gr\"obner basis.

        For (2), it is well-known that in degree $i$ there exist $\binom{n}{i}$ square-free monomials in $P$.
        The polynomial $f \in \mathcal{F}$ contributes $\sum_{i = 0}^{d - \degree{f}} \binom{n}{i}$ many rows to the $K$-Boolean Macaulay matrix.
        Then, the total number of rows in $M_{\leq d}^\text{Bool}$ is bounded by
        \[
        	\sum_{f \in \mathcal{F}} \sum_{i = 0}^{d - \degree{f}} \binom{n}{i}
        	\leq \sum_{f \in \mathcal{F}} \sum_{i = 0}^{d - D} \binom{n}{i}
        	= \abs{F} \cdot \sum_{i = 0}^{d - D} \binom{n}{i}.
        \]
        The number of columns is bounded by $\sum_{i = 0}^{d} \binom{n}{i}$, so we also have that $\rank \left( M_{\leq d}^\text{Bool} \right) \leq \sum_{i = 0}^{d} \binom{n}{i}$, and the claim follows from the complexity of Gaussian elimination \cite[\S 2.2]{Storjohann-Matrix}.

        For (3), if $d - D \geq n$, then there exists at least one polynomial $f \in \mathcal{F}$ which has to be multiplied with any square-free monomial $s \in P$.
        So, $f$ contributes $\sum_{i = 0}^{n} \binom{n}{i} = 2^n$ many rows.
        Obviously, any other $g \in \mathcal{F}$ contributes at most $\leq 2^n$ many rows.
        Hence, the total number of rows is bounded by $\abs{\mathcal{F}} \cdot 2^n$, and the number of columns as well as the rank are bounded by $2^n$.
        Again, the claim follows by \cite[\S 2.2]{Storjohann-Matrix}.
    \end{proof}
    \begin{rem}
        This theorem can be refined by replacing the binomial coefficients $\binom{n}{i}$ with the number of monomials in $P \setminus \LM_> \left( \mathcal{F}_\text{Bool} \right)$ of degree $i$.
    \end{rem}

    So far our analysis does not differ from the one over $\F_2$, however the binary field has a special feature.
    Division with respect to the field equations $\left( x_1^2 - x_1, \dots, x_n^2 - x_n \right)$ is ``for free'', i.e.\ its complexity is negligible compared to Gaussian elimination.
    For general $K$-Boolean polynomial systems this is not necessarily the case.
    The complexity of general polynomial division with respect to quadratic $K$-Boolean polynomial systems follows as corollary to \cite[Proposition~3.10]{Tenti-Overdetermined}.
    \begin{cor}\label[cor]{Cor: Boolean Macaulay matrix construction complexity}
        Let $K$ be a field, let $\mathcal{F}_\text{Bool} \subset P = K [x_1, \dots, x_n]$ be a quadratic polynomial system, and let $>$ be a degree compatible term order on $P$.
        Assume that $\mathcal{F}_\text{Bool}$ is $K$-Boolean with respect to $>$.
        \begin{enumerate}
        	\item Let $f \in P$, then computation of $r_f = f \mod \left( \mathcal{F}_\text{Bool} \right)$ with respect to $>$ requires
        	\[
        		\mathcal{O} \left( \frac{n^2 + 3 \cdot n + 2}{2} \cdot \binom{n + \degree{f}}{\degree{f}} \right)
        	\]
        	field operations.

        	\item Let $\mathcal{F} \subset P$, $D_{\min} = \min_{f \in \mathcal{F}} \degree{f}$ and $D_{\max} = \max_{f \in \mathcal{F}} \degree{f}$, then construction of $M_{\leq d}^\text{Bool}$ for $\mathcal{F} \cup \mathcal{F}_\text{Bool}$ requires
        	\[
        		\mathcal{O} \left( \abs{\mathcal{F}} \cdot \frac{n^2 + 3 \cdot n + 2}{2} \cdot \sum_{i = 0}^{d - D_{\min}} \binom{n}{i} \cdot \binom{n + D_{\max} + i}{D_{\max} + i} \right)
        	\]
        	field operations.
        \end{enumerate}
    \end{cor}
    \begin{proof}
        For (1), in $P$ there exist $\sum_{i = 0}^{\degree{f}} \binom{n + i - 1}{i} = \binom{n + \degree{f}}{\degree{f}}$ monomials of degree $\leq \degree{f}$ and $\binom{n + 1}{2} + \binom{n}{1} + 1$ monomials of degree $\leq 2$.
        Now the claim follows from \cite[Proposition~3.10]{Tenti-Overdetermined}.

        For (2), with (1) we can estimate the complexity of the construction as
        \begin{align*}
        	&\sum_{f \in \mathcal{F}} \sum_{i = 0}^{d - \degree{f}} \binom{n}{i} \cdot \frac{n^2 + 3 \cdot n + 2}{2} \cdot \binom{n + \degree{f} + i}{\degree{f} + i} \\
        	&\stackrel{(\ast)}{\leq} \frac{n^2 + 3 \cdot n + 2}{2} \cdot \sum_{f \in \mathcal{F}} \sum_{i = 0}^{d - D_{\min}} \binom{n}{i} \cdot \binom{n + D_{\max} + i}{D_{\max} + i} \\
        	&= \abs{\mathcal{F}} \cdot \frac{n^2 + 3 \cdot n + 2}{2} \cdot \sum_{i = 0}^{d - D_{\min}} \binom{n}{i} \cdot \binom{n + D_{\max} + i}{D_{\max} + i},
        \end{align*}
        where $(\ast)$ follows from $d - \degree{f} \leq d - D_{\min}$ and the fact that $\binom{n + d}{d}$ is strictly increasing in $d$, hence $\binom{n + \degree{f} + i}{\degree{f} + i} \leq \binom{n + D_{\max} + i}{D_{\max} + i}$.
    \end{proof}

    Of course, the $K$-Boolean Macaulay matrix generalizes to more advanced linear algebra-based Gr\"obner basis algorithms like F4/5 \cite{Faugere-F4,Faugere-F5}.

    \subsection{The Eigenvalue Method}\label{Sec: Eigenvalue Method}
    Now we discuss how the zeros of a polynomial system can be found via linear algebra once a zero-dimensional DRL Gr\"obner basis of the input system is known \cite[Chapter~6]{Kreuzer-CompLinAlg}.
    Let $I \subset P = K [x_1, \dots, x_n]$ be a zero-dimensional ideal, i.e.\ $P / I$ is a finite dimensional $K$-vector space \cite[Proposition~3.7.1]{Kreuzer-CompAlg1}, and let $d = \dim_K \left( P / I \right)$.
    If $>$ is a term order on $P$ and $\mathcal{G} \subset I$ is $>$-Gr\"obner basis, then a $K$-vector space basis $\mathcal{B}$ of $P / I$ is given by the monomials not contained in $\LM_> \left( \mathcal{G} \right) = \LM_> \left( I \right)$.
    Now let $f \in P$, then the multiplication map
    \begin{equation}
    	\theta_f: P / I \to P / I, \qquad\qquad x \mapsto x \cdot f,
    \end{equation}
    is $K$-linear, so it can be represented by a matrix.
    This matrix is called the multiplication matrix $\mathbf{M}_f$ for $f$.
    Given a $>$-Gr\"obner basis its computation is straight-forward: Index the columns of $\mathbf{M}_f$ by the $K$-vector space basis $\mathcal{B}$, and the rows by $b \cdot f$, where $b \in \mathcal{B}$.
    The row $b \cdot f$ is then simply the coefficient vector of $b \cdot f$ in $P / I$, which can be computed via $b \cdot f \mod (\mathcal{G})$ with respect to $>$.

    Now assume in addition that $K$ is algebraically closed, then for every point in the variety $\mathbf{x} \in \mathcal{V} \left( I \right) = \left\{ \mathbf{x} \in K^n \mid \forall g \in I \!: g (\mathbf{x}) = 0 \right\}$ the point $f (\mathbf{x})$ is an eigenvalue of $\mathbf{M}_f$, see \cite[Proposition~6.2.1]{Kreuzer-CompLinAlg}.
    Therefore, we can compute the variety $\mathcal{V} \left( I \right)$ as follows: Compute the eigenvalues of $\mathbf{M}_{x_1}, \dots, \mathbf{M}_{x_n}$, iterate over all possible combinations, and verify whether a combination is indeed a solution to $I$.
    This approach is known as \emph{Eigenvalue Method} \cite[\S 6.2.A]{Kreuzer-CompLinAlg}.

	It is well-known that the eigenvalues of a matrix $\mathbf{M} \in K^{N \times N}$ are given by the roots of its characteristic polynomial
	\begin{equation}
		\chi_\mathbf{M} (x) = \det \left( \mathbf{I}_{N \times N} \cdot x - \mathbf{M} \right) \in K [x].
	\end{equation}
	So the complexity of the Eigenvalue Method boils down to $n$ determinant computations with polynomial entries.

    \subsubsection{Solving Structured Polynomial Systems Over Finite Fields}\label{Sec: solving structured systems}
    Let $\mathcal{F} = \{ f_1, \dots, f_n \} \subset P = K [x_1, \dots, x_n]$ be such that $\LM_{DRL} \left( f_i \right) = x_i^{d_i}$ for all $1 \leq i \leq n$.
    Then, trivially $\mathcal{F}$ is a DRL Gr\"obner basis, see e.g.\ \cite[Chapter~2~\S 9~Theorem~3, Proposition~4]{Cox-Ideals} with $\dim_K \left( \mathcal{F} \right) = \prod_{i = 1}^{n} d_i$.
    Let $D = \prod_{i = 1}^{n - 1} d_i$, and let $\mathcal{B}' \subset K [x_1, \dots, x_{n - 1}] / \left( x_1^{d_1}, \dots, x_{n - 1}^{d_{n - 1}} \right)$ be some $K$-vector space basis, then $\mathcal{B} = \bigcup_{i = 0}^{d_n - 1} x_n^i \cdot \mathcal{B}'$ is a $K$-vector space basis for $P / (\mathcal{F})$.
    In particular, the multiplication matrix $\mathbf{M}_{x_n}$ is then of the form
    \begin{equation}\label{Equ: multiplication matrix}
        \mathbf{M}_{x_n} =
        \begin{blockarray}{ c c c c c c }
            \mathcal{B}' & x_{n} \cdot \mathcal{B}' & x_{n}^{2} \cdot \mathcal{B}' & \ldots &  x_{n}^{d_{n - 1}} \cdot \mathcal{B}' \\
            \begin{block}{ ( c c c c c ) c}
                \mathbf{0}     & \mathbf{I}     & \mathbf{0} & \ldots & \mathbf{0}             & x_{n} \cdot \mathcal{B}'             \\
                \mathbf{0}     & \mathbf{0}     & \mathbf{I} & \ldots & \mathbf{0}             & x_{n}^{2} \cdot \mathcal{B}'         \\
                \vdots         & \vdots         & \vdots     & \ddots & \vdots                 & \vdots                               \\
                \mathbf{0}     & \mathbf{0}     & \mathbf{0} & \ldots & \mathbf{I}             & x_{n}^{d_{n} - 1} \cdot \mathcal{B}' \\
                \mathbf{A}_{0} & \mathbf{A}_{1} & \ldots     & \ldots & \mathbf{A}_{d_{n} - 1} & x_{n}^{d_{n}} \cdot \mathcal{B}'     \\
            \end{block}
        \end{blockarray}
        ,
    \end{equation}
    where $\mathbf{0} \in K^{D \times D}$ denotes the zero matrix, $\mathbf{I} \in K^{D \times D}$ the identity matrix, and $\mathbf{A}_{0}, \dots, \allowbreak \mathbf{A}_{d_{n} - 1} \in K^{D \times D}$ some matrices.
    By \cite[Lemma~2]{C:BBLAOP24} we then have that
    \begin{equation}\label{Equ: characteristic polynomial}
        \chi_{\mathbf{M}_{x_n}} (x) = \det \left( x_n^{d_n} \cdot \mathbf{I} - \sum_{i = 0}^{d_{n} - 1} x_n^i \cdot \mathbf{A}_i \right),
    \end{equation}
    and computation of this determinant requires \cite[\S 3.2]{C:BBLAOP24}
    \begin{equation}\label{Equ: characteristic polynomial complexity}
        \mathcal{O} \bigg( d_{n} \cdot \log_{2} (d_{n})^{2} \cdot \Big( 1 + \log_{2} \big( \log_{2} (d_{n}) \big) \Big) \cdot \Big( \frac{D}{d_{n}} \Big)^{\omega} \bigg)
    \end{equation}
    field operations, where $2 \leq \omega < 2.371552$ is again the linear algebra constant \cite{SODA:WXXZ24}.

    In this paper we solely work over finite fields $\Fq$, and we are only interested in the $\Fq$-valued roots of $\mathcal{F}$.
    Given a univariate polynomial $f \in \Fq$ with $\degree{f} = d$ we can always achieve this restriction by passing to the GCD with the field equation $g = \gcd \left( f, x^q - x \right)$.
    By \cite[\S 3.1]{ToSC:BBLP22} computation of this GCD requires
    \begin{equation}\label{Equ: GCD with field equation complexity}
        \mathcal{O} \left(
        \begin{rcases}
            \begin{dcases}
                d \cdot \log_2 (d) \cdot \log \big( \log (d) \big) \cdot \big( \log (d) + \log_2 (q) \big), & d < q, \\
                d \cdot \log (d)^2 \cdot \log \big( \log (d) \big), & d \geq q
            \end{dcases}
        \end{rcases}
        \right)
    \end{equation}
    field operations.
    Provided that only a few roots of $f$ come from $\Fq$ the final factoring step can be neglected.

    In this paper we will encounter quadratic DRL Gr\"obner bases $\mathcal{F} = \{ f_1, \dots, f_n \} \subset  P = \Fq [x_1, \dots, x_n]$ of the form
    \begin{equation}\label{Equ: special shape}
        f_i = \mathcal{Q} \left( x_i, \dots, x_n \right) + \mathcal{A}_i \left( x_1, \dots, x_n \right),
    \end{equation}
    where $\mathcal{Q}_i \in P$ is a quadratic polynomial which has the term $x_i^2$ and $\mathcal{A}_i \in P$ is an affine polynomial which has the term $x_{n - i + 1}$ for $1 \leq i \leq n$.
    It is easy to see that $\mathcal{F}$ is a DRL Gr\"obner basis for $x_1 > \ldots > x_n$.
    For such systems we can give an optimized Eigenvalue Method to find the $\Fq$-valued solutions.
    We use a $\Fq$-vector space basis such that the multiplication matrix $\mathbf{M}_{x_n}$ has the shape from \Cref{Equ: multiplication matrix}.
    Then, the characteristic polynomial is constructed in $\mathcal{O} \left( 2^{\omega \cdot (n - 1) + 1} \right)$ field operations, and its $\Fq$-valued roots are extracted via the GCD with the field equation.
    Let $\alpha \in \Fq$ be such a solution for $x_n$, then we can substitute $x_n = \alpha$ back into $\mathcal{F}$.
    In particular, from $f_n$ we obtain the affine equation
    \begin{equation}
        \alpha^2 + \tilde{\mathcal{A}_n} = 0.
    \end{equation}
    Now we rearrange this equation to have $x_{1}$ on the right-hand side.
    Then, we can additionally eliminate the variable $x_{1}$ in the equations $f_{1} = \ldots = f_{n - 1} = 0$.
    Since $x_1$ is not present in the quadratic terms of $f_2, \dots, f_{n - 1} \in \Fq [x_2, \dots, x_{n - 1}]$, they are again a zero-dimensional DRL Gr\"obner basis.
    The polynomial $f_1$ is an additional overdetermined polynomial which we ignore.
    Now we iterate this procedure to solve for $x_{n - 1}, x_{n - 2}, \dots, x_{\floor{{\frac{n}{2}}}}$, then we lift to a global solution guess for $x_1, \dots, x_n$, and finally verify it on $\mathcal{F}$.
    Let $\mathcal{D} (d, q)$ denote the complexity of the GCD between a degree $d$ polynomial and the field equation of $\Fq$, moreover let $N$ be a bound on the number of $\Fq$-valued solutions at each iteration in our solving algorithm.
    Then, this dedicated Eigenvalue Method requires
    \begin{align}
        &\mathcal{O} \left(
        \begin{rcases}
            \begin{dcases}
                \sum_{i = 1}^{\frac{n}{2}} N^{\frac{n}{2} - i} \cdot \Big( 2^{\omega \cdot (2 \cdot i - 1) + 1} + \mathcal{D} \big( 2^{2 \cdot i}, q \big) \Big), & 2 \mid n, \\
                \sum_{i = 1}^{\frac{n - 1}{2}} N^{\frac{n - 1}{2} - i} \cdot \Big( 2^{\omega \cdot 2 \cdot i + 1} + \mathcal{D} \big( 2^{2 \cdot i + 1}, q \big) \Big), & 2 \nmid n
            \end{dcases}
        \end{rcases}
        \right) \\
        &= \mathcal{O} \left(
        \begin{rcases}
            \begin{dcases}
                2^{\omega + 1} \cdot \frac{2^{\omega \cdot n} - N^\frac{n}{2}}{2^{2 \cdot \omega} - N} + \sum_{i = 1}^{\frac{n}{2}} N^{\frac{n}{2} - i} \cdot  \mathcal{D} \big( 2^{2 \cdot i}, q \big), & 2 \mid n, \\
                2^{2 \cdot \omega + 1} \cdot \frac{2^{\omega \cdot (n - 1)} - N^\frac{n - 1}{2}}{2^{2 \cdot \omega} - N} + \sum_{i = 1}^{\frac{n - 1}{2}} N^{\frac{n - 1}{2} - i} \cdot \mathcal{D} \big( 2^{2 \cdot i + 1}, q \big), & 2 \nmid n
            \end{dcases}
        \end{rcases}
        \right) \label{Equ: complexity solving structured systems}
    \end{align}
    field operations.

    \section{Analysis of \Ciminion}\label{Sec: Analysis of Ciminion}
    At round level \Ciminion is a three branch Feistel network.
    In \cite[\S 4.1]{ToSC:Steiner24} a DRL Gr\"obner basis for Feistel-$2n/n$ polynomial systems has been computed by substituting the linear branches into the non-linear ones.
    In principle, we are going to replicate this strategy to construct a DRL Gr\"obner basis for \Ciminion.

    \subsection{A DRL Gr\"obner Basis for \Ciminion}\label{Sec: Ciminion Groebner basis}
    For simplicity let us denote the matrix in the $i$\textsuperscript{th} \Ciminion round by $\mathbf{A}_i \in \Fq^{3 \times 3}$, see \Cref{Equ: Ciminion round function}, then
    \begin{equation}\label{Equ: inverse Ciminion matrix}
        \mathbf{A}_i =
        \begin{pmatrix}
            0 & 0         & 1         \\
            1 & c_4^{(i)} & c_4^{(i)} \\
            0 & 1         & 1
        \end{pmatrix}
        , \qquad
        \mathbf{A}_i^{-1} =
        \begin{pmatrix}
            0  & 1 & -c_4^{(i)} \\
            -1 & 0 & 1          \\
            1  & 0 & 0
        \end{pmatrix}
        ,
    \end{equation}
    where $c_4^{(i)} \in \Fq \setminus \{ 0, 1 \}$.
    \begin{thm}\label{Th: Ciminion Groebner basis}
        Let $\Fq$ be a finite field, and let $\mathcal{F}_\Ciminion = \left\{ \mathbf{f}^{(1)}, \dots, \mathbf{f}^{(r)} \right\} \subset \Fq \Big[ y_1, y_2, \mathbf{x}^{(1)}, \allowbreak \dots, \mathbf{x}^{(r - 1)}, x \Big]$ be a \Ciminion polynomial system.
        For the DRL term order $y_1 > y_2 > \mathbf{x}^{(1)} > \ldots > \mathbf{x}^{(r - 1)} > x$, the following procedure computes a DRL Gr\"obner basis for \Ciminion:
        \begin{enumerate}
            \item For every round $1 \leq i \leq r$ compute $\mathbf{g}^{(i)} = \mathbf{A}_i^{-1} \mathbf{f}_i$.

            \item Collect the linear polynomials from the $\mathbf{g}^{(i)}$'s in $\mathcal{L}$, and perform Gaussian elimination on $\mathcal{L}$.

            \item Reduce the non-linear polynomials in the $\mathbf{g}^{(i)}$'s modulo $\mathcal{L}$.
        \end{enumerate}
    \end{thm}
    \begin{proof}
        By inspection of the round function (\Cref{Equ: Ciminion round function}) we have that
        \begin{align}
            \mathbf{g}^{(1)} &=
            \begin{pmatrix}
                \aleph \\
                y_1 \\
                \aleph \cdot y_1 + y_2
            \end{pmatrix}
            + \mathbf{A}_1^{-1} \left( \mathbf{c}^{(1)} - \mathbf{x}^{(1)} \right), \label{Equ: first round equation matrix inverted}\\
            \mathbf{g}^{(i)} &=
            \begin{pmatrix}
                x_{1}^{(i - 1)} \\
                x_{2}^{(i - 1)} \\
                x_{3}^{(i - 1)} + x_{1}^{(i - 1)} \cdot x_{2}^{(i - 1)}
            \end{pmatrix}
            + \mathbf{A}_i^{-1} \left( \mathbf{c}^{(i)} - \mathbf{x}^{(i)} \right), \label{Equ: round equation matrix inverted} \\
            \mathbf{g}^{(r)} &=
            \begin{pmatrix}
                x_{1}^{(r - 1)} \\
                x_{2}^{(r - 1)} \\
                x_{3}^{(r - 1)} + x_{1}^{(r - 1)} \cdot x_{2}^{(r - 1)}
            \end{pmatrix}
            + \mathbf{A}_{r}^{-1} \left( \mathbf{c}^{(r)} -
            \begin{pmatrix}
                c_1 - p_1 \\
                c_2 - p_2 \\
                x
            \end{pmatrix}
            \right). \label{Equ: last round equation matrix inverted}
        \end{align}
        Hence, $\mathcal{L}$ contains $\mathbf{g}^{(1)}$ and the first two components of all remaining $\mathbf{g}^{(i)}$'s.

        For Gaussian elimination on $\mathcal{L}$ let us start in the last round $i = r$.
        By \Cref{Equ: inverse Ciminion matrix,Equ: last round equation matrix inverted} the linear polynomials with leading monomials $x_{1}^{(r - 1)}$, $x_{2}^{(r - 1)}$ contain a linear term in $x$ and a constant term.
        Moving to $i = r - 1$ next the linear polynomials from $\mathbf{g}^{(r - 1)}$ depend on $\mathbf{x}^{(r - 2)}$ and $\mathbf{x}^{(r - 1)}$, and via Gaussian elimination the variables $x_{1}^{(r - 1)}$ and $x_{2}^{(r - 1)}$ get eliminated.
        Thus, the linear polynomials with leading monomials in $x_{1}^{(r - 2)}$ and $x_{2}^{(r - 2)}$ can only depend on $x_{3}^{(r - 1)}$ and $x$.
        Moreover, by \Cref{Equ: inverse Ciminion matrix} the coefficients for $x_{3}^{(r - 1)}$ are non-zero.
        Via a downwards induction we can then conclude for $2 \leq i \leq r - 1$ that after Gaussian elimination
        \begin{align*}
        	\LM_{DRL} \left( g_1^{(i)} \right) &= x_{1}^{(i - 1)}, \\
        	\LM_{DRL} \left( g_2^{(i)} \right) &= x_{2}^{(i - 1)}, \\
        	g_1^{(i)}, g_2^{(i)} &\in \Fq \big[ x_{3}^{(i)}, \dots, x_{3}^{(r - 1)}, x \big],
        \end{align*}
        and that the coefficients of $x_{3}^{(i)}$ in $g_1^{(i)}$ and $g_2^{(i)}$ are non-zero.

        The first round needs special care since the first row of $\mathbf{A}_1^{-1}$ is non-zero on the variable $x_{2}^{(1)}$.
        (If it were zero on $x_{1}^{(1)}$ and $x_{2}^{(1)}$, the first polynomial in $\mathbf{g}^{(1)}$ would have leading monomial $x_{3}^{(1)}$, and we could proceed.)
        Expansion of \Cref{Equ: first round equation matrix inverted} yields
        \[
            \mathbf{g}^{(1)} =
            \begin{pmatrix}
                \aleph \\
                y_1 \\
                \aleph \cdot y_1 + y_2
            \end{pmatrix}
            + \mathbf{A}_1^{-1} \mathbf{c}^{(1)} -
            \begin{pmatrix}
                x_{2}^{(1)} - c_4^{(1)} \cdot x_{3}^{(1)} \\
                -x_{1}^{(1)} + x_{3}^{(1)} \\
                x_{1}^{(1)}
            \end{pmatrix}
            .
        \]
        Producing a polynomial with leading monomial $y_2$ is straight-forward, we simply have to take $g_3^{(1)} - \aleph \cdot g_2^{(1)}$.
        On the other hand, there is second polynomial with leading monomial $x_{2}^{(1)}$ after inversion of the matrices
        \[
            g_2^{(2)} = x_{2}^{(1)} + \left( \mathbf{A}_2^{-1} \mathbf{c}^{(2)} \right)_2 - \left( -x_{1}^{(2)} + x_{3}^{(2)} \right).
        \]
        Then
        \[
            g_1^{(1)} + g_2^{(2)} = \aleph + \left( \mathbf{A}_1^{-1} \mathbf{c}^{(1)} \right)_1 + c_4^{(1)} \cdot x_{3}^{(1)} + \left( \mathbf{A}_2^{-1} \mathbf{c}^{(2)} \right)_2 + x_{1}^{(2)} - x_{3}^{(2)},
        \]
        and by the choice of our DRL term order and $c_4^{(1)} \neq 0$ this polynomial has leading monomial $x_{3}^{(1)}$.
        Obviously, after Gaussian elimination of $\mathcal{L}$ the terms of $g_1^{(1)} + g_2^{(2)}, g_2^{(1)}, g_3^{(1)} - \aleph \cdot g_2^{(1)}$ after the leading monomials will also only depend on the variables $x_{3}^{(2)}, \dots, x_{3}^{(r - 1)}, x$.

        Lastly, let us consider the reduction modulo $\mathcal{L}$.
        By our previous observations all linear equations are of the form
        \begin{align}
            y_1 + \mathcal{A}_1 \left( x_{3}^{(2)}, \dots, x_{3}^{(r - 1)}, x \right) &= 0, \nonumber \\
            y_2 +\mathcal{A}_2 \left( x_{3}^{(2)}, \dots, x_{3}^{(r - 1)}, x \right) &= 0, \nonumber \\
            x_{3}^{(1)} + \mathcal{A}_3 \left( x_{3}^{(2)}, \dots, x_{3}^{(r - 1)}, x \right) &= 0, \nonumber \\
            x_{1}^{(i - 1)} + c_4^{(i)} \cdot x_{3}^{(i)} + \ldots &= 0, \nonumber \\
            x_{2}^{(i - 1)} - x_{3}^{(i)} + \ldots &= 0, \nonumber \\
            x_{1}^{(r - 1)} + c_4^{(r)} \cdot x + \ldots &= 0, \nonumber \\
            x_{2}^{(r - 1)} - x + \ldots &= 0, \nonumber
        \end{align}
        where $\mathcal{A}_1$, $\mathcal{A}_2$ and $\mathcal{A}_3$ are affine polynomials and $2 \leq i \leq r - 1$.
        So, for $2 \leq i \leq r$ we can directly conclude from \Cref{Equ: round equation matrix inverted,Equ: last round equation matrix inverted} that
        \[
            \hat{g}_3^{(i)} = \left( g_3^{(i)} \mod (\mathcal{L}) \right) \in \Fq \Big[ x_{3}^{(2)}, \dots, x_{3}^{(r - 1)}, x \Big],
        \]
        has leading monomial ${x_{3}^{(i)}}^2$ and respectively $x^2$ for the last round.

        Thus, we have constructed a polynomial system with pairwise coprime leading monomials, so by \cite[Chapter 2 §9 Theorem 3, Proposition 4]{Cox-Ideals} we have constructed a Gr\"obner basis.
    \end{proof}

    Note that if we only keep the $r - 1$ quadratic polynomials of $\mathcal{G}$, then we still have a fully determined polynomial system in $r - 1$ variables.
    Therefore, we call the quadratic polynomials of $\mathcal{G}$ the downsized \Ciminion polynomial system.

    For the next corollary, recall that a LEX Gr\"obner basis $\mathcal{F} = \{ f_1, \dots, f_n \} \subset K [x_1, \dots, \allowbreak x_n]$ is said to be in $x_n$-\emph{shape position} if it is of the form
    \begin{align}
        f_i &= x_i - \tilde{f}_n (x_n), \qquad 1 \leq i \leq n - 1, \\
        f_n &= \tilde{f}_n (x_n)
    \end{align}
    It is well-known that the LEX Gr\"obner basis of a radical ideal can be brought into shape position for some ordering of the variables, see \cite[Theore~3.7.25]{Kreuzer-CompAlg1}.
    \begin{cor}\label[cor]{Cor: Ciminion Macaualy bound}
        In the situation of \Cref{Th: Ciminion Groebner basis}, let $\mathcal{G}_\Ciminion$ denote the \Ciminion DRL Gr\"obner basis.
        \begin{enumerate}
            \item $\dim_{\Fq} \left( \mathcal{F}_\Ciminion \right) = 2^{r - 1} = 2^{r_C + r_E - 1}$.

            \item Let $I \subset \overline{\Fq} \Big[ y_1, y_2, \mathbf{x}^{(1)}, \dots, \mathbf{x}^{(r - 1)}, x \Big] [x_0]$ be a homogeneous ideal such that $\mathcal{Z}_+ (I) \neq \emptyset$ and $\mathcal{G}_{\Ciminion}^{\homog} \subset I$.
            Then $I$ is in generic coordinates.

            \item $\solvdeg_{DRL} \left( \mathcal{G}_\Ciminion \right) \leq r = r_C + r_E$.

            \item The LEX Gr\"obner basis for \Ciminion under $y_1 > y_2 > \mathbf{x}^{(1)} > \ldots > \mathbf{x}^{(r - 1)} > x$ is in $x$-shape position.
        \end{enumerate}
    \end{cor}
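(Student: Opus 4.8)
The plan is to extract all four statements from the shape of the Gröbner basis $\mathcal{G} = \mathcal{G}_\Ciminion$ furnished by \Cref{Th: Ciminion Gr\"obner basis}: after the construction, $\mathcal{G}$ consists of $2(r_C+r_E)+1$ linear polynomials whose leading monomials are $2(r_C+r_E)+1$ pairwise distinct variables, together with the $r_C+r_E-1$ quadratic polynomials $\hat{g}_3^{(i)}$ whose leading monomials are the squares of the remaining variables; since the total number of variables is $3(r_C+r_E)$, this accounting is forced. For \emph{(1)}, a standard monomial modulo $\inid_{DRL}\big((\mathcal{F}_\Ciminion)\big)$ is then exactly a square-free monomial in those $r_C+r_E-1$ "quadratic" variables, of which there are $2^{r_C+r_E-1}$; as the count of standard monomials is unchanged under base field extension, $\dim_{\Fq}\left(\mathcal{F}_\Ciminion\right) = \dim_{\Fq}(P/I) = 2^{r_C+r_E-1}$ holds over $\Fq$ and over $\overline{\Fq}$.

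For \emph{(2)} I would first pass to top components: because DRL is degree-compatible, $\LM_{DRL}(g) = \LM_{DRL}(g^\topcomp)$ for every $g \in \mathcal{G}$, so $\inid_{DRL}\left(\mathcal{G}^\topcomp\right)$ still contains a pure power of every variable (a variable for each linear generator, a square for each quadratic one). Hence $\left(\mathcal{G}^\topcomp\right)$ is zero-dimensional, its affine zero locus over $\overline{\Fq}$ is $\{\mathbf{0}\}$, so $\sqrt{\mathcal{G}^\topcomp} = (x_1,\dots,x_n)$ and $\mathcal{Z}_+\left(\left(\mathcal{G}^\topcomp\right)\right) = \emptyset$; since $(\mathcal{G}) = (\mathcal{F}_\Ciminion)$ is a proper zero-dimensional ideal, \Cref{Th: generic coordinates and highest degree components} already gives that $\left(\mathcal{G}^\homog\right)$ is in generic coordinates. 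To upgrade to an arbitrary homogeneous $I \supseteq \mathcal{G}^\homog$ with $\mathcal{Z}_+(I) \neq \emptyset$, I would argue on the zero locus directly: a point of $\mathcal{Z}_+(I)$ on the hyperplane $\{x_0 = 0\}$ would satisfy $g^\homog\vert_{x_0=0} = g^\topcomp = 0$ for all $g \in \mathcal{G}$, hence lie in $\mathcal{Z}_+\left(\left(\mathcal{G}^\topcomp\right)\right) = \emptyset$, a contradiction; so $\mathcal{Z}_+(I)$ is a nonempty finite set disjoint from $\{x_0 = 0\}$, whence $x_0$ lies in no associated prime of $P/I^\satur$, i.e.\ $x_0 \nmid 0 \bmod I^\satur$, which is precisely being in generic coordinates. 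I expect this last passage from $\left(\mathcal{G}^\homog\right)$ to a general superideal $I$ to be the most delicate point, since it must be phrased through $I^\satur$ rather than through $I$ itself.

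For \emph{(3)} I would apply \Cref{Th: solvdeg and CM-regularity} to the reduced system of \Cref{Th: Ciminion Gr\"obner basis}, which has $2(r_C+r_E)+1$ linear and $r_C+r_E-1$ quadratic polynomials in $3(r_C+r_E)$ variables: with $l = \min\{n+1,m\} = 3(r_C+r_E)$ the Macaulay bound evaluates to $2(r_C+r_E-1) + \left(2(r_C+r_E)+1\right) - 3(r_C+r_E) + 1 = r_C+r_E$, and the same arithmetic on the downsized system of $r_C+r_E-1$ quadratics in $r_C+r_E-1$ variables gives the identical value. Since this reduced system and $\mathcal{F}_\Ciminion$ generate the same ideal for the same DRL order and are linked only by the invertible constant linear combinations and linear reductions of \Cref{Th: Ciminion Gr\"obner basis}, the bound transfers to $\solvdeg_{DRL}\left(\mathcal{F}_\Ciminion\right) \leq r_C+r_E$; bookkeeping which generating set the solving degree is measured on (note that $\mathcal{F}_\Ciminion^\topcomp$ itself is \emph{not} zero-dimensional, so \Cref{Th: solvdeg and CM-regularity} cannot be invoked on $\mathcal{F}_\Ciminion$ unchanged) is the one subtlety that needs care.

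For \emph{(4)} I would work inside $\overline{\Fq}[\dots]/(\mathcal{F}_\Ciminion)$ and unwind the cipher. Each round map $\mathcal{R}^{(l)}$ is a polynomial bijection whose inverse is again polynomial of degree $2$ (a linear layer composed with the inverse Toffoli gate), so the relations $\mathcal{R}^{(l)}(\text{input}) - (\text{output}) \in (\mathcal{F}_\Ciminion)$ yield $\text{input} \equiv \left(\mathcal{R}^{(l)}\right)^{-1}(\text{output}) \bmod (\mathcal{F}_\Ciminion)$, applying a polynomial map to both sides of a congruence being legitimate. Starting from the last round, whose output is $(c_1-p_1,\, c_2-p_2,\, x)^\intercal$, this successively expresses $\mathbf{x}_E^{(r_E)}, \mathbf{x}_E^{(r_E-1)}, \dots, \mathbf{x}_E^{(1)}, \mathbf{x}_C^{(r_C-1)}, \dots, \mathbf{x}_C^{(1)}$ and finally $y_1, y_2$ (inverting $\mathcal{R}^{(1)}(\aleph,\cdot,\cdot)$) as polynomials in the single variable $x$ modulo $I$. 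Hence $P/I$ is generated as an $\Fq$-algebra by the residue of $x$; combined with $\dim_{\Fq}(P/I) = 2^{r_C+r_E-1}$ from \emph{(1)}, the minimal polynomial $\mu$ of that residue has degree $2^{r_C+r_E-1}$, so $P/I \cong \Fq[x]/(\mu(x))$ and the reduced LEX Gröbner basis is $\{\mu(x)\} \cup \{\, v - h_v(x) \,\}$ over the remaining variables $v$, with pairwise coprime LEX leading terms $x^{2^{r_C+r_E-1}}$ and the $v$'s — i.e.\ in $x$-shape position.
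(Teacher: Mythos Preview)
Your proof is correct. For parts (1)--(3) you follow essentially the same route as the paper, which simply cites the DRL Gr\"obner basis of \Cref{Th: Ciminion Gr\"obner basis} together with \Cref{Th: solvdeg and CM-regularity,Th: generic coordinates and highest degree components}; your treatment is in fact more careful than the paper's, since you explicitly flag that $\mathcal{F}_\Ciminion^\topcomp$ is \emph{not} zero-dimensional (the three top components in each round $i\geq 2$ are all scalar multiples of the single monomial $x_{\ast,1}^{(i-1)}x_{\ast,2}^{(i-1)}$), so \Cref{Th: solvdeg and CM-regularity} must be applied to $\mathcal{G}_\Ciminion$ rather than to $\mathcal{F}_\Ciminion$ directly, with the transfer justified by the fact that the operations of \Cref{Th: Ciminion Gr\"obner basis} are all realized inside a low-degree Macaulay matrix of $\mathcal{F}_\Ciminion$.

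For part (4) your argument genuinely differs from the paper's. The paper works inside the downsized DRL basis and exploits the specific shape $\hat{g}_3^{(i)} = \mathcal{Q}_i\big(x_{\ast,3}^{(i)},\dots,x\big) - \mathcal{A}_i\big(x_{\ast,3}^{(i-1)},\dots,x\big)$ with $\mathcal{A}_i$ affine and containing $x_{\ast,3}^{(i-1)}$, solving for $x_{\ast,3}^{(i-1)}$ and substituting downward. You instead invert the full round functions $\mathcal{R}^{(l)}$ (polynomial bijections with polynomial inverse) from the known output $(c_1-p_1,\,c_2-p_2,\,x)^\intercal$ backwards, which directly shows that $P/I$ is generated as an $\Fq$-algebra by the residue of $x$; combined with the dimension count from (1) this forces $P/I \cong \Fq[x]/(\mu)$ and hence shape position. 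Both are back-substitution arguments, but yours is more structural --- it needs only the invertibility of the rounds, not the fine shape of the $\hat{g}_3^{(i)}$ established in the proof of \Cref{Th: Ciminion Gr\"obner basis} --- and packages the conclusion via the single algebra-generation statement; the paper's version is more hands-on but makes the LEX polynomials explicit as iterated substitutions of the $\hat{g}_3^{(i)}$.
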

    \begin{proof}
        (1), (2) and (3) are consequences of the DRL Gr\"obner basis from \Cref{Th: Ciminion Groebner basis} and application of \cite[Theorem~3.2]{ToSC:Steiner24} and \Cref{Th: solvdeg and CM-regularity}.
        For the last claim we take a look at the shape of the quadratic polynomials in the DRL Gr\"obner basis
        \begin{align*}
            g_3^{(2)} &= \mathcal{Q}_2 \left( x_{3}^{(2)}, \dots, x_{3}^{(r - 1)}, x \right) + \mathcal{A}_2 \left( x_{3}^{(2)}, \dots, x_3^{(r - 1)}, x \right), \\
            g_3^{(i)} &= \mathcal{Q}_i \left( x_{3}^{(i)}, \dots, x_{3}^{(r - 1)}, x \right) + \mathcal{A}_i \left( x_{3}^{(i - 1)}, \dots, x_{3}^{(r - 1)}, x \right),  \qquad 3 \leq i \leq r \\
            g_3^{(r)} &= \mathcal{Q} \left( x \right) + \mathcal{A}_{r} \left( x_3^{(r - 1)}, x \right),
        \end{align*}
        where the $\mathcal{Q}_i$'s are quadratic polynomials which contain the monomial ${x_{3}^{(i)}}^2$ and respectively $x^2$ for the last round, and the $\mathcal{A}_i$'s are affine polynomials that contain the monomial $x_{3}^{(i - 1)}$ for all $i \geq 3$ and without any restrictions on $\mathcal{A}_2$.
        Then, by iterative substitution of $g_3^{(r)}$ into $g_3^{(r - 1)}$ to eliminate $x_3^{(r - 1)}$, $g_3^{(r - 1)}$ into $g_3^{(r - 2)}$ to eliminate $x_3^{(r - 2)}$, etc.\  we produce a LEX Gr\"obner basis in $x$-shape position.
    \end{proof}

    \subsection{Bariant's Attack}\label{Sec: Bariant's attack}
    Recently Bariant introduced an attack on \Ciminion via univariate greatest common divisor computations \cite{Bariant-Ciminion}.
    As in our polynomial model, he introduced an auxiliary variable $X$ for the truncated component.
    If one inverts $p_C \circ p_E$ using $(c_1 - p_1, c_2 - p_2, X)^\intercal$ as output state, then one obtains three polynomials, see \cite[Fig.~4]{Bariant-Ciminion},
    \begin{equation}\label{Equ: Bariant polynomials}
        \aleph = f_1 (X), \qquad K_1 = f_2 (X), \qquad K_2 = f_3 (X).
    \end{equation}
    Since the first polynomial is univariate in $X$, one can compute its greatest common divisor with $X^q - X$ to solve for $X$ and henceforth also solve for $K_1$ and $K_2$.

    Observe that Bariant's univariate polynomial is exactly the univariate polynomial of the LEX Gr\"obner basis.
    \begin{lem}\label[lem]{Lem: Bariant polynomial}
        Let $\Fq$ be a finite field, let $I_\Ciminion \subset \Fq \big[ y_1, y_2, \mathbf{x}^{(1)}, \ldots, \mathbf{x}^{(r - 1)}, x \big]$ be a \Ciminion ideal, let $\tilde{f} \in \Fq [x]$ be the univariate LEX polynomial from \Cref{Cor: Ciminion Macaualy bound}, and let $f \in \Fq [x]$ be Bariant's univariate polynomial.
        Then $f = \alpha \cdot \tilde{f}$ for some $\alpha \in \Fqx$.
    \end{lem}
    \begin{proof}
        \Ciminion is an iterated Feistel permutation therefore we have that
        \[
            \mathcal{R}^{(i)} \left( \mathbf{x}^{(i - 1)} \right) \equiv \mathbf{x}^{(i)} \mod I_\Ciminion \qquad \Longleftrightarrow \qquad \mathbf{x}^{(i - 1)} \equiv {\mathcal{R}^{(i)}}^{-1} \left( \mathbf{x}^{(i)} \right) \mod I_\Ciminion,
        \]
        where $2 \leq i \leq r - 1$, and analogously for $i = 1, r$.
        Since the inverse round functions are in $I_\Ciminion$, we can iteratively substitute them into each other.
        This yields exactly the polynomials from \Cref{Equ: Bariant polynomials}, so $f \in I_\Ciminion$.
        By \Cref{Cor: Ciminion Macaualy bound} we can construct a \Ciminion LEX Gr\"obner basis in $x$-shape position, then for any $g \in \Fq [x]$ we have $g \in I_\Ciminion \Leftrightarrow \tilde{f} \mid g$ \cite[Chapter~4~\S 5~Exercise~13]{Cox-Ideals}.
        Since $\degree{\tilde{f}} = \degree{f}$ they can only differ up to a multiplicative constant.
    \end{proof}

    Moreover, the construction of the univariate LEX polynomial can be effectively expressed in the cost of univariate polynomial multiplication over finite fields \cite{Cantor-Multiplication}, which is quasi-linear in the degree.

    As a countermeasure, Bariant proposes to add a linear combination of $K_1$ and $K_2$ to each component after $p_C$ \cite[\S 4]{Bariant-Ciminion}.
    We note that in general this modification breaks our Gr\"obner basis from \Cref{Th: Ciminion Groebner basis}.
    As a settlement, we propose to add a linear combination of $K_1$ and $K_2$ only to the non-linear parts of the round functions, i.e.\
    \begin{equation}\label{Equ: Ciminion fix}
        \begin{split}
            \mathcal{R}^{(i)}: \Fq^3 \times \Fq^2 &\to \Fq^3, \\
            \begin{pmatrix}
                x \\ y \\ z
            \end{pmatrix}
            ,
            \begin{pmatrix}
                K_1 \\ K_2
            \end{pmatrix}
            &\mapsto
            \begin{pmatrix}
                0 & 0         & 1         \\
                1 & c_4^{(i)} & c_4^{(i)} \\
                0 & 0         & 1
            \end{pmatrix}
            \begin{pmatrix}
                x \\ y \\ z + x \cdot y + \alpha \cdot K_1 + \beta \cdot K_2
            \end{pmatrix}
            + \mathbf{c}^{(i)},
        \end{split}
    \end{equation}
    where $\alpha, \beta \in \Fqx$.
    This modification does not affect the proof of \Cref{Th: Ciminion Groebner basis}, hence we maintain the DRL Gr\"obner basis for the modified \Ciminion polynomial system.
    However, this modification breaks the construction of the univariate LEX polynomial via univariate polynomial multiplication, because the quadratic DRL Gr\"obner basis polynomial for the $r_C$\textsuperscript{th} round will now also contain linear terms for the variables $x_{3}^{(2)}, \dots, x_{3}^{(r - 1)}, x$ instead of only $x_{3}^{(r_C - 1)}, \dots, x_{3}^{(r - 1)}, x$.
    While the LEX Gr\"obner basis of the modified \Ciminion might still be in $x$-shape position, we can now construct it only via an FGLM algorithm.

    Note if the modification of \Cref{Equ: Ciminion fix} is performed in all rounds of $p_E$, then the \emph{Middle State-Output Relation} polynomial system of Bariant et al.\ \cite[\S 5]{ToSC:BBLP22} becomes underdetermined, so this attack also becomes more difficult.

    \subsection{Implementation}
    We implemented \Ciminion in the computer algebra systems \SageMath \cite{SageMath} and \OSCAR \cite{OSCAR}.\footnote{\repository}
    The \SageMath implementation can handle $128$ bit prime numbers, in particular the Gr\"obner basis from \Cref{Th: Ciminion Groebner basis} can be computed for primes of that size.
    For $p = 2^{127} + 45$, $r_C = 90$ and $r_E = 14$ it is computed in negligible time.

    \subsubsection{\Ciminiontwo}\label{Sec: Ciminion2}
    We also implemented a modified version of \Ciminion to protect against Bariant's attack, called \Ciminiontwo, which adds $K_1 + K_2$ in every non-linear round of $p_E$ and $p_C$, i.e.\ only in the first round of $p_E$ the key addition is omitted.
    The round function for the first sample then modifies to
    \begin{equation}
        \mathcal{R}^{(i)}:
        \begin{pmatrix}
            x \\ y \\ z
        \end{pmatrix}
        ,
        \begin{pmatrix}
            K_1 \\ K_2
        \end{pmatrix}
        \mapsto
        \begin{pmatrix}
            0 & 0         & 1         \\
            1 & c_4^{(i)} & c_4^{(i)} \\
            0 & 1         & 1
        \end{pmatrix}
        \begin{pmatrix}
            x \\ y \\ z + x \cdot y + K_1 + K_2
        \end{pmatrix}
        + \mathbf{c}^{(i)},
    \end{equation}
    where $2 \leq i \leq r_C + r_E$.
    For higher samples one then adds $K_{2 \cdot j - 1} + K_{2 \cdot j}$ in $p_{E}$.
    For \Ciminiontwo a DRL Gr\"obner basis for the first sample is also computed in negligible time in \SageMath for practical parameters.
    Due to the additional key addition the quadratic polynomials in the DRL Gr\"obner basis modify to
    \begin{align}
        g_3^{(i)} &= \mathcal{Q}_i \left( x_3^{(i)}, \dots, x_3^{(r - 1)}, x \right) + \mathcal{A}_i \left( x_3^{(2)}, \dots, x_3^{(r - 1)}, x \right), \qquad 2 \leq i \leq r - 1, \\
        g_3^{(r)} &= \mathcal{Q}_r (x) + \mathcal{A}_i \left( x_3^{(2)}, \dots, x_3^{(r - 1)}, x \right),
    \end{align}
    where $\mathcal{Q}_i$ is quadratic and has the term ${x_3^{(i)}}^2$ and respectively $x^2$ for the last round, and $\mathcal{A}_i$ is affine and has non-zero coefficients for $x_3^{(2)}, \dots, x_3^{(r - 1)}, x$.

    \subsection{Cryptanalysis}\label{Sec: Ciminion cryptanalysis}
    We now analyze the complexity of a Gr\"obner basis attack on \Ciminion and \Ciminiontwo via our Gr\"obner basis, and compare it to the designer's estimation and Bariant's attack.

    \subsubsection{Estimation of the Designers}
    For Gr\"obner basis cryptanalysis of \Ciminion \cite[\S 4.4]{EC:DGGK21} the designers considered a weaker scheme where the keys are added after the permutations $p_E$ instead of before the rolling functions, see \cite[Fig.~7]{EC:DGGK21}.
    The designers claimed that the weaker scheme is simpler to analyze than \Ciminion.
    We reject this claim, after all we have computed a DRL Gr\"obner basis for \Ciminion with rather simple linear variable eliminations, see \Cref{Th: Ciminion Groebner basis}.

    The designers assumed that weakened \Ciminion polynomial systems are regular, moreover for regular systems it is well-known that the degree of regularity $d_{\reg}$ is equal to the Macaulay bound (\Cref{Th: solvdeg and CM-regularity}).
    Then, the DRL Gr\"obner basis complexity is bounded by \cite[Theorem~7]{Bardet-Complexity}
    \begin{equation}\label{Equ: complexity regular systems}
    	\mathcal{O} \left( \binom{n + d_{\reg}}{d_{\reg}}^\omega \right)
    \end{equation}
    field operations.
    For \Ciminion cryptanalysis, it is then conjectured that the DRL Gr\"obner basis computation is the most costly part in an attack.

    The designers did not analyze the iterated model, they only studied a model of two equations in two key variables, we call this model the \emph{fully substituted model}.
    By construction the polynomials in the fully substituted model have degree $2^{r_C + r_E - 1}$.
    Under the regularity assumption we then have that $d_{\reg} = 2 \cdot \left( 2^{r_C + r_E - 1} - 1 \right) + 1 = 2^{r_C + r_E} - 1$, so \Cref{Equ: complexity regular systems} evaluates to
    \begin{equation}
        \mathcal{O} \left( \binom{2^{r_C + r_E} + 1}{2^{r_C + r_E} - 1}^{\omega} \right)
        = \mathcal{O} \left( \left( \frac{2^{r_C + r_E} \cdot \left( 2^{r_C + r_E} + 1 \right)}{2} \right)^\omega \right).
    \end{equation}
    In \Cref{Tab: Ciminion complexity} this estimate is evaluated for sample we round numbers.

    \subsubsection{Complexity of Bariant's Attack}\label{Sec: Bariant attack complexity}
    Bariant's attack is split into two parts: construction of the univariate polynomial via an iterated sequence of squarings and the GCD with the field equation.
    Multiplication of two univariate polynomials $f, g \in \Fq [x]$ with $\degree{f}, \degree{g} \leq d$ can be performed in $\mathcal{O} \Big( d \cdot \log_2 (d) \cdot \log_2 \big( \log_2 (d) \big) \Big)$ field operations \cite{Cantor-Multiplication}, and the complexity for the GCD was given in \Cref{Equ: GCD with field equation complexity}.
    The construction step then requires \cite[\S 3.2]{Bariant-Ciminion}
    \begin{equation}
        \mathcal{O} \big( 2^{r_C + r_E} \cdot (r_C + r_E) \cdot \log_2 (r_C + r_E) \big)
    \end{equation}
    field operations, and the GCD requires
    \begin{equation}
        \mathcal{O} \Big( 2^{r_C + r_E - 1} \cdot (r_C + r_E - 1) \cdot \big( r_C + r_E - 1 + \log_2 (q) \big) \cdot \log_2 (r_C + r_E) \Big).
    \end{equation}
    Therefore, in general the GCD step dominates the complexity.
    In \Cref{Tab: Ciminion complexity} Bariant's attack is evaluated for sample we round numbers.

    Note that the GCD step implicitly assumes that $2^{r_C + r_E - 1} < q$, though if the degree of the \Ciminion function exceeds $q$, then an exhaustive search over the truncated output always performs better than the GCD computation.

    \subsubsection{Eigenvalue Method}
    For standard \Ciminion it is sufficient to find a $\Fq$-valued root for $x$, because we can then invert $p_C \circ p_E$ to obtain a key guess.
    Obviously, our \Ciminion DRL Gr\"obner basis (\Cref{Th: Ciminion Groebner basis}) satisfies the structural assumption from \Cref{Sec: solving structured systems}, so the complexity of constructing a univariate polynomial in $x$ is, see \Cref{Equ: characteristic polynomial complexity},
    \begin{equation}
        \mathcal{O} \left( 2^{\omega \cdot (r_C + r_E - 2) + 1} \right).
    \end{equation}
    However, for standard \Ciminion this complexity is always dominated by Bariant's attack, see \Cref{Sec: Bariant attack complexity}.

    For \Ciminiontwo on the other hand solving for $x$ is not sufficient, because we cannot invert $p_C \circ p_E$ anymore due to additional key additions.
    Therefore, for \Ciminiontwo we have to find a $\Fq$-valued solution for all quadratic variables to derive a key guess.
    Since the \Ciminiontwo DRL Gr\"obner basis satisfies the structure of \Cref{Equ: special shape}, the complexity of the Eigenvalue Method is given by \Cref{Equ: complexity solving structured systems}
    \begin{equation}\label{Equ: Ciminion Eigenvalue Method complexity}
        \mathcal{O} \left(
        \begin{rcases}
            \begin{dcases}
                2^{\omega + 1} \cdot \frac{2^{\omega \cdot (r_C + r_E - 1)} - N^\frac{r_C + r_E - 1}{2}}{2^{2 \cdot \omega} - N}, & 2 \mid (r_C + r_E - 1), \\
                2^{2 \cdot \omega + 1} \cdot \frac{2^{\omega \cdot (r_C + r_E - 2)} - N^\frac{r_C + r_E - 2}{2}}{2^{2 \cdot \omega} - N}, & 2 \nmid (r_C + r_E - 1)
            \end{dcases}
        \end{rcases}
        \right),
    \end{equation}
    where $N$ is a bound on the number of $\Fq$-valued solutions in every iteration.
    In addition, we absorbed the terms for the GCDs with the field equations into the implied constant, because these steps are always dominated by the construction of the respective characteristic polynomial.
    Since a bound on $N$ is in general not known we consider a worst-case scenario approach.
    We assume that the first $\Fq$-valued solution that an adversary recovers in every iteration leads to the true key, i.e.\ we assume that $N = 1$ in \Cref{Equ: Ciminion Eigenvalue Method complexity}.

    In \Cref{Tab: Ciminion complexity} we evaluated the discussed \Ciminion complexity estimations.
    All our estimations are from a designer's point of view, which assume an ideal adversary who has access to a hypothetical matrix multiplication algorithm which achieves $\omega = 2$.
    As already mentioned, for standard \Ciminion Bariant's attack is the most performative attack, and at least $r_C + r_E \geq 112$ is required to achieve $128$ bits of security.
    For \Ciminiontwo on the other hand this attack is not feasible, so we have to fall back to the Eigenvalue Method.
    Then, we require at least $r_C + r_E \geq 66$ to achieve $128$ bits of security.
    Finally, the estimation of the \Ciminion designers grossly underestimates the capabilities of Gr\"obner basis attacks.
    After all, we can construct an iterated \Ciminion DRL Gr\"obner basis in negligible time, whereas the designers estimated this complexity to be in $\mathcal{O} \left( 2^{2 \cdot \omega \cdot (r_C + r_E)} \right)$.

    \begin{table}[H]
        \centering
        \caption{Complexity estimations for various Gr\"obner basis attacks on \Ciminion and \Ciminiontwo with $q = 2^{127} + 45$ and $\omega = 2$.}
        \label{Tab: Ciminion complexity}
        \begin{tabular}{ c | M{25mm} | c || M{37mm} }
            \toprule
            & \multicolumn{3}{ c }{Complexity (bits)} \\
            \midrule
            $r_C + r_E$ & Bariant's Attack \cite{Bariant-Ciminion} & Eigenvalue Method & Fully Substituted Model \cite[\S 4.4]{EC:DGGK21} \\
            \midrule

            $32$  & $45.60$  & $61.09$  & $126$ \\
            $33$  & $46.67$  & $63.09$  & $130$ \\
            $65$  & $80.18$  & $127.09$ & $258$ \\
            $66$  & $81.22$  & $129.09$ & $262$ \\
            $111$ & $127.45$ & $219.09$ & $442$ \\
            $112$ & $128.47$ & $221.09$ & $446$ \\

            \bottomrule
        \end{tabular}
    \end{table}

    Recall from \cite[Table~1]{EC:DGGK21} that $r_C \in \left\{ s + 6, \ceil{\frac{2 \cdot (s + 6)}{3}} \right\}$, where $s$ is the required security level.
    For \Ciminiontwo this already implies that $\dim_{\Fq} \left( \mathcal{F}_\Ciminiontwo \right) \geq 2^\frac{s}{2}$, and the complexity of the Eigenvalue Method will scale approximately with $\gtrsim 2^{\omega \cdot \frac{s}{2}} \geq 2^s$.
    So, for now we conclude that \Ciminiontwo resists Gr\"obner basis attacks within the security level, even under ideal assumptions on the adversary.

    \section{Analysis of \Hydra}\label{Sec: Analysis of Hydra}
    The \Hydra heads are based on the Lai--Massey permutation.
    For the Gr\"obner basis analysis we recall that the Lai--Massey permutation
    \begin{equation}
        \mathcal{L}:
        \begin{pmatrix}
            x \\ y
        \end{pmatrix}
        \mapsto
        \begin{pmatrix}
            x + F (x - y) \\
            y + F (x - y)
        \end{pmatrix}
    \end{equation}
    is linearly equivalent to the Feistel permutation
    \begin{equation}
        \mathcal{F}:
        \begin{pmatrix}
            x \\ y
        \end{pmatrix}
        \mapsto
        \begin{pmatrix}
            x \\
            y + F (x)
        \end{pmatrix}
    \end{equation}
    via the matrices $\mathbf{A}_1 =
    \begin{pmatrix}
        1 & -1 \\
        0 & 1
    \end{pmatrix}
    $ and $\mathbf{A}_2 =
    \begin{pmatrix}
        1 & 1 \\
        0 & 1 \\
    \end{pmatrix}
    $, i.e.\ $\mathcal{L} = \mathbf{A}_2 \circ \mathcal{F} \circ \mathbf{A}_1$.
    Inspired by this equivalence we can reduce the \Hydra polynomial system to only one non-linear equation per round, i.e.\ the \Hydra polynomial system transforms to a Feistel-like polynomial system.

    In \cite[\S 6.2]{ToSC:Steiner24} the iterated polynomial systems representing generalized Feistel ciphers were analyzed.
    For these polynomial systems linear systems were derived, and in case the systems are of full rank the iterated polynomial systems are in generic coordinates, see \cite[Theorem~6.5]{ToSC:Steiner24}.
    For a system in generic coordinates we then have a proven upper bound on the solving degree, see \Cref{Th: solvdeg and CM-regularity}.

    Based on the aforementioned linear equivalence, we are going to replicate this strategy to construct a linear system for \Hydra.
    Further, if the \Hydra linear system has full rank, then we can transform the system into a DRL Gr\"obner basis via a linear change of coordinates.

    \subsection{Generic Coordinates for \Hydra}\label{Sec: linear system for Hydra}
    Recall that the \Hydra iterated polynomial system (\Cref{Def: Hydra polynomial system}) consists of $16 \cdot r_\mathcal{H} + 8$ polynomials in $16 \cdot r_\mathcal{H} + 4$ variables.
    Let
    \begin{equation}
        \mathbf{A}_n =
        \begin{pmatrix}
            1 & 0 & \ldots & 0 & -1 \\
            0 & 1 & \ldots & 0 & -1 \\
            \vdots & \vdots & \ddots & \vdots & \vdots \\
            0 & 0 & \ldots & 1 & -1 \\
            0 & 0 & \ldots & 0 & 1
        \end{pmatrix}
        \in \GL{n}{\Fq},
    \end{equation}
    and let $\mathbf{B} = \diag \left( \mathbf{A}_4, \mathbf{A}_4 \right) \in \GL{8}{\Fq}$.
    Then the first seven components of $\mathbf{A}_8 \mathbf{M}_\mathcal{J}^{-1} \mathbf{f}_j^{(i)}$, where $1 \leq i \leq r_\mathcal{H}$ and $j \in \{ 1, 2 \}$, are affine and the eighth component is quadratic.
    Moreover, the fourth and the eighth components of $\mathbf{B} \mathbf{M}_\mathcal{R}^{-1} \mathbf{f}_\mathcal{R}$ are quadratic while the other ones are affine.
    In particular, by extracting the $2 \cdot 7 \cdot r_\mathcal{H}$ affine components of the $\mathbf{A}_8 \mathbf{M}_\mathcal{J}^{-1} \mathbf{f}_j^{(i)}$'s we can eliminate the variables
    \begin{equation}
        y_1, \dots, y_4, z_1, \dots, z_3, x_{j, 1}^{(i)}, \dots, x_{j, 7}^{(i)},
    \end{equation}
    where $0 \leq i \leq r_\mathcal{H} - 1$ and $j \in \{ 1, 2 \}$, in the quadratic equations.
    (This can be explicitly seen in the proof of \Cref{Th: Hydra generic coordinates test}.)
    Thus, after the elimination we reduce to $2 \cdot r_\mathcal{H} + 8$ polynomials, of which $6$ are also affine, in $2 \cdot r_\mathcal{H} + 4$ variables.

    Next we derive a linear system to verify being in generic coordinates for \Hydra.
    \begin{thm}\label{Th: Hydra generic coordinates test}
        Let $\Fq$ be a finite field, let $\overline{\Fq}$ be its algebraic closure, let
        \[
            \mathcal{F}_\Hydra = \left\{ \mathbf{f}_1^{(i)}, \mathbf{f}_\mathcal{R}, \mathbf{f}_2^{(i)} \right\}_{1 \leq i \leq r_\mathcal{H}} \subset P = \overline{\Fq} \Big[ \mathbf{y}, \mathbf{z}, \mathbf{x}_1^{(1)}, \dots, \mathbf{x}_1^{(r_\mathcal{H} - 1)}, \mathbf{x}_2^{(0)}, \dots, \mathbf{x}_2^{(r_\mathcal{H} - 1)}, \mathbf{k} \Big]
        \]
        be a \Hydra polynomial system, and let
        \[
            \mathcal{G}_\Hydra = \left\{ \mathbf{A}_8 \mathbf{M}_\mathcal{J}^{-1} \mathbf{f}_1^{(i)}, \mathbf{B} \mathbf{M}_\mathcal{R}^{-1} \mathbf{f}_\mathcal{R}, \mathbf{A}_8 \mathbf{M}_\mathcal{J}^{-1} \mathbf{f}_2^{(i)} \right\}_{1 \leq i \leq r_\mathcal{H}}.
        \]
        Then every homogeneous ideal $I \subset P [x_0]$ such that $x_0 \notin \sqrt{I}$ and $\mathcal{G}_\Hydra^\homog \subset I$ is in generic coordinates if the following linear system has rank $16 \cdot r_\mathcal{H} + 4$
            \begin{align}
            y_1 - z_4 + \bigg( \mathbf{M}_\mathcal{J}^{-1} \Big( \mathbf{k}' - \mathbf{x}_1^{(1)} \Big) \bigg)_1 + \bigg( \mathbf{M}_\mathcal{J}^{-1} \Big( \mathbf{x}_1^{(1)} - \mathbf{k}' \Big) \bigg)_8 &= 0, \nonumber \\
            &\vdots \nonumber \\
            z_3 - z_4 + \bigg( \mathbf{M}_\mathcal{J}^{-1} \Big( \mathbf{k}'- \mathbf{x}_1^{(1)} \Big) \bigg)_7 + \bigg( \mathbf{M}_\mathcal{J}^{-1} \Big( \mathbf{x}_1^{(1)} - \mathbf{k}' \Big) \bigg)_8 &= 0, \nonumber \\
            \sum_{l = 1}^{4} y_l - z_l &= 0, \nonumber \\
            x_{2, k}^{(0)} - x_{2, 8}^{(0)} + \bigg( \mathbf{M}_\mathcal{J}^{-1} \Big( \mathbf{k}' - \mathbf{x}_2^{(1)} \Big) \bigg)_k + \bigg( \mathbf{M}_\mathcal{J}^{-1} \Big( \mathbf{x}_2^{(1)} - \mathbf{k}' \Big) \bigg)_8 &= 0, \nonumber \\
            \sum_{l = 1}^{8} (-1)^{\floor{\frac{l - 1}{4}}} \cdot x_{2, l}^{(0)} &= 0, \nonumber \\
            x_{j, k}^{(i - 1)} - x_{j, 8}^{(i - 1)} + \bigg( \mathbf{M}_\mathcal{J}^{-1} \Big( \mathbf{k}' - \mathbf{x}_j^{(i)} \Big) \bigg)_k + \bigg( \mathbf{M}_\mathcal{J}^{-1} \Big( \mathbf{x}_j^{(i)} - \mathbf{k}' \Big) \bigg)_8 &= 0, \nonumber \\
            \sum_{l = 1}^{8} (-1)^{\floor{\frac{l - 1}{4}}} \cdot x_{j, l}^{(i - 1)} &= 0, \nonumber \\
            x_{1, k}^{(r_\mathcal{H} - 1)} - x_{1, 8}^{(r_\mathcal{H} - 1)} + \bigg( \mathbf{M}_\mathcal{J}^{-1} \Big( \mathbf{k}' + (\mathbf{y}, \mathbf{z})^\intercal \Big) \bigg)_k - \bigg( \mathbf{M}_\mathcal{J}^{-1} \Big(\mathbf{k}' + (\mathbf{y}, \mathbf{z})^\intercal \Big) \bigg)_8 &= 0, \nonumber \\
            \sum_{l = 1}^{8} (-1)^{\floor{\frac{l - 1}{4}}} \cdot x_{1, l}^{(r_\mathcal{H} - 1)} &= 0, \nonumber \\
            x_{2, k}^{(r_\mathcal{H} - 1)} - x_{2, 8}^{(r_\mathcal{H} - 1)} + \bigg( \mathbf{M}_\mathcal{J}^{-1} \Big( \mathbf{k}' + \mathbf{x}_2^{(0)} \Big) \bigg)_k - \bigg( \mathbf{M}_\mathcal{J}^{-1} \Big( \mathbf{k}' + \mathbf{x}_2^{(0)} \Big) \bigg)_8 &= 0, \nonumber \\
            \sum_{l = 1}^{8} (-1)^{\floor{\frac{l - 1}{4}}} \cdot x_{2, l}^{(r_\mathcal{H} - 1)} &= 0, \nonumber \\
            y_1 - y_4 - \Big( \mathbf{M}_\mathcal{R}^{-1} \mathbf{x}_2^{(0)} \Big)_1 + \Big( \mathbf{M}_\mathcal{R}^{-1} \mathbf{x}_2^{(0)} \Big)_4 &= 0, \nonumber \\
            &\vdots \nonumber \\
            y_3 - y_4 - \Big( \mathbf{M}_\mathcal{R}^{-1} \mathbf{x}_2^{(0)} \Big)_3 + \Big( \mathbf{M}_\mathcal{R}^{-1} \mathbf{x}_2^{(0)} \Big)_4 &= 0, \nonumber \\
            z_1 - z_4 - \Big( \mathbf{M}_\mathcal{R}^{-1} \mathbf{x}_2^{(0)} \Big)_5 + \Big( \mathbf{M}_\mathcal{R}^{-1} \mathbf{x}_2^{(0)} \Big)_8 &= 0, \nonumber \\
            &\vdots \nonumber \\
            z_3 - z_4 - \Big( \mathbf{M}_\mathcal{R}^{-1} \mathbf{x}_2^{(0)} \Big)_7 + \Big( \mathbf{M}_\mathcal{R}^{-1} \mathbf{x}_2^{(0)} \Big)_8 &= 0, \nonumber
        \end{align}
        where $2 \leq i \leq r_\mathcal{H} - 1$, $j \in \{ 1, 2 \}$ and $1 \leq k \leq 7$.
    \end{thm}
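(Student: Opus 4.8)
The plan is to imitate the generalized‑Feistel analysis of \cite[\S 6.2]{ToSC:Steiner24}: write the transformed system $\mathcal{G}_\Hydra$ out coordinate by coordinate, extract its highest‑degree part $\mathcal{G}_\Hydra^\topcomp$, and show that full column rank of the displayed linear system forces $\mathcal{Z}_+\big(\mathcal{G}_\Hydra^\topcomp\big)=\emptyset$. By \Cref{Th: generic coordinates and highest degree components} this is precisely the generic‑coordinates property of $\big(\mathcal{G}_\Hydra^\homog\big)$, and a short saturation argument then propagates it to every ideal $I$ lying above $\mathcal{G}_\Hydra^\homog$.

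First I would make $\mathcal{G}_\Hydra$ explicit. With $\mathbf{x}=\mathbf{x}_j^{(i-1)}$ one has $\mathbf{M}_\mathcal{J}^{-1}\mathbf{f}_j^{(i)}=\big(x_l+f(\mathbf{x})\big)_{1\le l\le 8}+\mathbf{M}_\mathcal{J}^{-1}\big(\mathbf{k}'+\mathbf{c}^{(i)}-\mathbf{x}_j^{(i)}\big)$, whose only nonlinear contribution is the single quadratic form $f(\mathbf{x})=\big(\sum_{l=1}^{8}(-1)^{\floor{\frac{l-1}{4}}}x_l\big)^2$, added identically to each of the eight otherwise affine coordinates. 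Since every row of $\mathbf{A}_8$ except the last is of the form $\mathbf{e}_k^\intercal-\mathbf{e}_8^\intercal$ (standard basis row vectors), left multiplication by $\mathbf{A}_8$ cancels $f(\mathbf{x})$ in the first seven coordinates — producing exactly the seven affine relations displayed for that round — and keeps it in the eighth coordinate, which thereby becomes quadratic with top component $f(\mathbf{x})$. The same computation with $\mathbf{B}=\diag(\mathbf{A}_4,\mathbf{A}_4)$ applied to $\mathbf{M}_\mathcal{R}^{-1}\mathbf{f}_\mathcal{R}$ cancels $g(\mathbf{y},\mathbf{z})$ and $g(\mathbf{z},\mathbf{y})$ in six coordinates — the six affine rolling relations — and keeps them in coordinates four and eight. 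What remains is bookkeeping for the three boundary cases: the first head round receives $(\mathbf{y},\mathbf{z})^\intercal$ in place of an intermediate state, the last head round of each branch feeds the input forward and subtracts $\mathbf{c}_j$, and the rolling relation also has $(\mathbf{y},\mathbf{z})^\intercal$ as input. Carrying these out reproduces the displayed affine relations verbatim, up to the additive constants, which are irrelevant for the rank.

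From this description, $\mathcal{G}_\Hydra^\topcomp$ is generated by: the linear parts of all affine components of $\mathcal{G}_\Hydra$, which are exactly the displayed affine relations with constants dropped; the squares $\big(\sum(-1)^{\floor{\frac{l-1}{4}}}x\big)^2$ from the eighth component of each head round; and the two products $g(\mathbf{y},\mathbf{z})$, $g(\mathbf{z},\mathbf{y})$ from the fourth and eighth components of the rolling relation. Every linear form that the statement lists in the form $\sum(-1)^{\floor{\frac{l-1}{4}}}x=0$ is a square root of one of the head squares, hence lies in $\sqrt{\big(\mathcal{G}_\Hydra^\topcomp\big)}$, while every displayed affine relation contributes its linear part to $\big(\mathcal{G}_\Hydra^\topcomp\big)$ directly; therefore all $16r_\mathcal{H}+6$ linear forms occurring in the displayed system vanish on $\mathcal{Z}_+\big(\mathcal{G}_\Hydra^\topcomp\big)$. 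If this system has rank $16r_\mathcal{H}+4$, i.e.\ full column rank in the $16r_\mathcal{H}+4$ variables of $P$, those forms span the whole space of linear forms, so their common projective zero locus is empty; hence $\mathcal{Z}_+\big(\mathcal{G}_\Hydra^\topcomp\big)=\emptyset$, equivalently $\big(\mathcal{G}_\Hydra^\topcomp\big)$ is zero‑dimensional in $P$, which by \Cref{Th: generic coordinates and highest degree components} is precisely the generic‑coordinates condition for $\big(\mathcal{G}_\Hydra^\homog\big)$. Note that $g(\mathbf{y},\mathbf{z})$ and $g(\mathbf{z},\mathbf{y})$ are products of two non‑proportional linear forms rather than squares, so they supply no single linear form to the radical — but they are not needed, since the forms we do list already cut out $\emptyset$.

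Finally I would pass to an arbitrary homogeneous $I\subset P[x_0]$ with $\mathcal{G}_\Hydra^\homog\subset I$ and $\mathcal{Z}_+(I)\neq\emptyset$, exactly as in \Cref{Cor: Ciminion Macaualy bound}\,(2) and \cite[Theorem~6.5]{ToSC:Steiner24}. Reducing the homogenized generators modulo $x_0$ gives $\mathcal{Z}_+\big(\mathcal{G}_\Hydra^\homog\big)\cap\{x_0=0\}\cong\mathcal{Z}_+\big(\mathcal{G}_\Hydra^\topcomp\big)=\emptyset$, so $\mathcal{Z}_+\big(\mathcal{G}_\Hydra^\homog\big)$ is a finite set of points lying off the hyperplane at infinity, and hence so is $\mathcal{Z}_+(I)\subseteq\mathcal{Z}_+\big(\mathcal{G}_\Hydra^\homog\big)$. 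For such an ideal the saturation $I^\satur$ has no embedded component at $(x_0,\dots,x_n)$ and the same projective zero locus as $I$, so $x_0$ is a zero divisor modulo $I^\satur$ only if it vanishes at some point of $\mathcal{Z}_+(I)$, which it does not; by \Cref{Def: generic coordinates}, $I$ is in generic coordinates. The genuinely laborious part of the whole argument is the first step — carrying out the two families of left multiplications symbolically, tracking the key schedule $\mathbf{k}'=(\mathbf{k},\mathbf{M}_\mathcal{E}\mathbf{k})^\intercal$ and the round constants, and verifying the three boundary rounds — whereas recognizing a full‑rank linear system as cutting out the empty projective variety, and transferring the conclusion from $\mathcal{G}_\Hydra^\topcomp$ to any overideal $I$, are formal.
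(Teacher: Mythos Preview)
Your proposal is correct and follows essentially the same route as the paper: explicitly compute $\mathbf{A}_8\mathbf{M}_\mathcal{J}^{-1}\mathbf{f}_j^{(i)}$ and $\mathbf{B}\mathbf{M}_\mathcal{R}^{-1}\mathbf{f}_\mathcal{R}$ to see that the top components consist of the displayed linear forms together with the head squares (whose square roots land in the radical) and the two rolling products $g(\mathbf{y},\mathbf{z}),g(\mathbf{z},\mathbf{y})$ (which are discarded), and then invoke \Cref{Th: generic coordinates and highest degree components} once the linear system has full column rank. Your closing paragraph on passing to an arbitrary overideal $I$ is a bit more explicit than the paper's one-line appeal to \cite[Theorem~3.2]{ToSC:Steiner24}, but the substance is identical.
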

    \begin{proof}
        By \cite[Theorem~3.2]{ToSC:Steiner24} $\left( \mathcal{G}_\Hydra^\homog \right)$ is in generic coordinates if
        \begin{equation}\label{Equ: Hydra highest degree components radical ideal}
            \sqrt{\mathcal{G}_\Hydra^\topcomp} = \left( \mathbf{y}, \mathbf{z}, \mathbf{x}_1^{(1)}, \dots, \mathbf{x}_1^{(r_\mathcal{H} - 1)}, \mathbf{x}_2^{(0)}, \dots, \mathbf{x}_2^{(r_\mathcal{H} - 1)}, \mathbf{k} \right).
        \end{equation}
        To shorten some writing let
        \begin{align}
            \mathbf{g}_j^{(i)} &= \left( \mathbf{A}_8 \mathbf{M}_\mathcal{J}^{-1} \mathbf{f}_j^{(i)} \right)^\homog \mod (x_0), \nonumber \\
            \mathbf{g}_\mathcal{R} &= \left( \mathbf{B} \mathbf{M}_\mathcal{R}^{-1} \mathbf{f}_\mathcal{R} \right)^\homog \mod (x_0). \nonumber
        \end{align}
        Then for $i = 1$ we have that
        \begin{align}
            \mathbf{g}_1^{(1)} &=
            \begin{pmatrix}
                \left\{ y_k - z_4 + \bigg( \mathbf{M}_\mathcal{J}^{-1} \Big( \mathbf{k}' - \mathbf{x}_1^{(1)} \Big) \bigg)_k + \bigg( \mathbf{M}_\mathcal{J}^{-1} \Big( \mathbf{x}_1^{(1)} - \mathbf{k}' \Big) \bigg)_8 \right\}_{1 \leq k \leq 7} \\[10pt]
                \left( \sum_{l = 1}^{4} y_l - z_l \right)^2
            \end{pmatrix}
            , \nonumber \\
            \mathbf{g}_2^{(1)} &=
            \begin{pmatrix}
                \left\{ x_{2, k}^{(0)} - x_{2, 8}^{(0)} + \bigg( \mathbf{M}_\mathcal{J}^{-1} \Big( \mathbf{k}' - \mathbf{x}_2^{(1)} \Big) \bigg)_k + \bigg( \mathbf{M}_\mathcal{J}^{-1} \Big( \mathbf{x}_2^{(1)} - \mathbf{k}' \Big) \bigg)_8 \right\}_{1 \leq k \leq 7} \\[10pt]
                \left( \sum_{l = 1}^{8} (-1)^{\floor{\frac{l - 1}{4}}} \cdot x_{2, l}^{(0)} \right)^2
            \end{pmatrix}
            . \nonumber
        \end{align}
        For $2 \leq i \leq r_\mathcal{H} - 1$ and $j \in \{ 1, 2 \}$ we have that
        \[
            \mathbf{g}_j^{(i)} =
            \begin{pmatrix}
                \left\{ x_{j, k}^{(i - 1)} - x_{j, 8}^{(i - 1)} + \bigg( \mathbf{M}_\mathcal{J}^{-1} \Big( \mathbf{k}' - \mathbf{x}_j^{(i)} \Big) \bigg)_k + \bigg( \mathbf{M}_\mathcal{J}^{-1} \Big( \mathbf{x}_j^{(i)} - \mathbf{k}' \Big) \bigg)_8 \right\}_{1 \leq k \leq 7} \\[10pt]
                \left( \sum_{l = 1}^{8} (-1)^{\floor{\frac{l - 1}{4}}} \cdot x_{j, l}^{(i - 1)} \right)^2
            \end{pmatrix}
            .
        \]
        For $i = r_\mathcal{H}$ we have that
        \begin{align}
            &\mathbf{g}_1^{(r_\mathcal{H})} = \nonumber \\
            &
            \begin{pmatrix}
                \left\{ x_{1, k}^{(r_\mathcal{H} - 1)} - x_{1, 8}^{(r_\mathcal{H} - 1)} + \bigg( \mathbf{M}_\mathcal{J}^{-1} \Big( \mathbf{k}' + (\mathbf{y}, \mathbf{z})^\intercal \Big) \bigg)_k - \bigg( \mathbf{M}_\mathcal{J}^{-1} \Big(\mathbf{k}' + (\mathbf{y}, \mathbf{z})^\intercal \Big) \bigg)_8 \right\}_{1 \leq k \leq 7} \\[10pt]
                \left( \sum_{l = 1}^{8} (-1)^{\floor{\frac{l - 1}{4}}} \cdot x_{1, l}^{(r_\mathcal{H} - 1)} \right)^2
            \end{pmatrix}
            , \nonumber \\
            &\mathbf{g}_2^{(r_\mathcal{H})} = \nonumber \\
            &
            \begin{pmatrix}
                \left\{ x_{2, k}^{(r_\mathcal{H} - 1)} - x_{2, 8}^{(r_\mathcal{H} - 1)} + \bigg( \mathbf{M}_\mathcal{J}^{-1} \Big( \mathbf{k}' + \mathbf{x}_2^{(0)} \Big) \bigg)_k - \bigg( \mathbf{M}_\mathcal{J}^{-1} \Big( \mathbf{k}' + \mathbf{x}_2^{(0)} \Big) \bigg)_8 \right\}_{1 \leq k \leq 7} \\[10pt]
                \left( \sum_{l = 1}^{8} (-1)^{\floor{\frac{l - 1}{4}}} \cdot x_{2, l}^{(r_\mathcal{H} - 1)} \right)^2
            \end{pmatrix}
            . \nonumber
        \end{align}
        Finally, for the rolling function we have that
        \[
            \mathbf{g}_\mathcal{R} =
            \begin{pmatrix}
                \left\{ y_k - y_4 - \Big( \mathbf{M}_\mathcal{R}^{-1} \mathbf{x}_2^{(0)} \Big)_k + \Big( \mathbf{M}_\mathcal{R}^{-1} \mathbf{x}_2^{(0)} \Big)_4 \right\}_{1 \leq k \leq 3} \\[10pt]
                g (\mathbf{y}, \mathbf{z}) \\
                \left\{ z_k - z_4 - \Big( \mathbf{M}_\mathcal{R}^{-1} \mathbf{x}_2^{(0)} \Big)_{4 + k} + \Big( \mathbf{M}_\mathcal{R}^{-1} \mathbf{x}_2^{(0)} \Big)_8 \right\}_{1 \leq k \leq 3} \\[10pt]
                g (\mathbf{z}, \mathbf{y})
            \end{pmatrix}
            .
        \]
        Except for the rolling function, all quadratic polynomials are squares of linear polynomials.
        Collecting the linear polynomials from $\mathcal{G}_\Hydra^\topcomp$ together with the linear polynomials from the squares, we yield the linear system from the assertion.
        It is clear that all these linear polynomials are elements of $\sqrt{\mathcal{G}_\Hydra^\topcomp}$.
        Since the highest degree components are homogeneous, we always have that $\sqrt{\mathcal{G}_\Hydra^\topcomp} \subset \left( \mathbf{y}, \mathbf{z}, \mathbf{x}_1^{(i)}, \dots, \mathbf{x}_1^{(r_\mathcal{H} - 1)}, \mathbf{x}_2^{(0)}, \dots, \mathbf{x}_2^{(r_\mathcal{H} - 1)}, \mathbf{k} \right)$.
        Thus, if the linear system from the assertion has full rank, then we have established the equality from \Cref{Equ: Hydra highest degree components radical ideal}.
    \end{proof}

    Naturally, this theorem can be extended to an arbitrary number of \Hydra samples.
    Now it is straight-forward to compute the Macaulay bound (\Cref{Th: solvdeg and CM-regularity}) for \Hydra.
    \begin{cor}\label[cor]{Cor: Hydra Macaulay bound}
        Let $\Fq$ be a finite field, let $\overline{\Fq}$ be its algebraic closure, let
        \[
            \mathcal{F}_\Hydra = \left\{ \mathbf{f}_1^{(i)}, \mathbf{f}_\mathcal{R}, \mathbf{f}_2^{(i)} \right\}_{1 \leq i \leq r_\mathcal{H}} \subset P = \overline{\Fq} \Big[ \mathbf{y}, \mathbf{z}, \mathbf{x}_1^{(1)}, \dots, \mathbf{x}_1^{(r_\mathcal{H} - 1)}, \mathbf{x}_2^{(0)}, \dots, \mathbf{x}_2^{(r_\mathcal{H} - 1)}, \mathbf{k} \Big]
        \]
        be a \Hydra polynomial system.
        Let
        \[
            \mathcal{G}_{\Hydra} = \left\{ \mathbf{A}_{8} \mathbf{M}_{\mathcal{J}}^{-1} \mathbf{f}_{1}^{(i)}, \mathbf{B} \mathbf{M}_\mathcal{R}^{-1} \mathbf{f}_\mathcal{R}, \mathbf{A}_{8} \mathbf{M}_{\mathcal{J}}^{-1} \mathbf{f}_{2}^{(i)} \right\}_{1 \leq i \leq r_{\mathcal{H}}},
        \]
        and assume that $\left( \mathcal{G}_{\Hydra}^{\homog} \right)$ is in generic coordinates.
        If the affine polynomials in $\mathcal{G}_{\Hydra}$ have rank $2 \cdot 7 \cdot r_{\mathcal{H}} + 6$, then
        \[
            \solvdeg_{DRL} \left( \mathcal{G}_{\Hydra} \right) \leq 2 \cdot r_{\mathcal{H}}.
        \]
    \end{cor}
    \begin{proof}
        If the affine polynomials have rank $2 \cdot 7 \cdot r_{\mathcal{H}} + 6$, then we can reduce to $2 \cdot r_\mathcal{H} + 2$ quadratic polynomials in $2 \cdot r_\mathcal{H} - 2$ variables.
        This variable elimination does not affect being in generic coordinates, so the Macaulay bound (\Cref{Th: solvdeg and CM-regularity}) evaluates to
        \[
            \solvdeg_{DRL} \left( \mathcal{G}_{\Hydra} \right) \leq 2 \cdot (2 \cdot r_{\mathcal{H}} - 1) - \left( 2 \cdot r_{\mathcal{H}} - 1 \right) + 1 = 2 \cdot r_{\mathcal{H}}. \qedhere
        \]
    \end{proof}

    \subsection{Extracting a Zero-Dimensional DRL Gr\"obner Basis for \Hydra}\label{Sec: Extracting a Groebner Basis}
    Due to the structure of the \Hydra heads we can actually produce a DRL Gr\"obner basis via a simple linear change of coordinates.
    Again, let $\mathcal{G}_\Hydra$ be as in \Cref{Th: Hydra generic coordinates test}, we assume that
    \begin{enumerate}[label=(\roman*)]
        \item\label{Item: maximal rank} the  linear system from \Cref{Th: Hydra generic coordinates test} has full rank, and

        \item the linear polynomials in $\mathcal{G}_{\Hydra}^\topcomp$ have rank $2 \cdot 7 \cdot r_{\mathcal{H}} + 6$.
    \end{enumerate}
    I.e., $\left( \mathcal{G}_\Hydra^\homog \right)$ is in generic coordinates, and we can use the linear polynomials in $\mathcal{G}_\Hydra$ to reduce the system to $2 \cdot r_\mathcal{H} + 2$ quadratic polynomials in $2 \cdot r_\mathcal{H} - 2$ variables.

    By construction $2 \cdot r_\mathcal{H}$ of these quadratic polynomials come from the \Hydra heads and their quadratic terms are squares of linear equations.
    In particular, before the variable elimination the quadratic terms are, see \Cref{Def: Hydra polynomial system} and the proof of \Cref{Th: Hydra generic coordinates test},
    \begin{alignat}{2}
    	g_{1, 8}^{(1)} &= \left( \sum_{l = 1}^{4} y_l - z_l \right)^2 &&= \mathcal{L}_{1, 1}^2, \label{Equ: linear equations 1} \\
    	g_{2, 8}^{(1)} &= \left( \sum_{l = 1}^{8} (-1)^{\floor{\frac{l - 1}{4}}} \cdot x_{2, l}^{(0)} \right)^2 &&= \mathcal{L}_{2, 1}^2, \label{Equ: linear equations 2} \\
    	g_{j, 8}^{(i)} &= \left( \sum_{l = 1}^{8} (-1)^{\floor{\frac{l - 1}{4}}} \cdot x_{j, l}^{(i - 1)} \right)^2 &&= \mathcal{L}_{j, i}^2, \qquad 2 \leq i \leq r_\mathcal{H},\ j \in \{ 1, 2 \}, \label{Equ: linear equations 3}
    \end{alignat}
    Recall that these linear equations $\mathcal{L} = \{ \mathcal{L}_{1, i}, \mathcal{L}_{2, i} \}_{1 \leq i \leq 2 \cdot r_\mathcal{H}}$ also appeared in the linear system from \Cref{Th: Hydra generic coordinates test}.
    By Assumption \ref{Item: maximal rank}, $\mathcal{L}$ together with the linear equations from $\mathcal{G}_\Hydra^\topcomp$ has rank $16 \cdot r_\mathcal{H} + 4$.
    After eliminating $2 \cdot 7 \cdot r_{\mathcal{H}} + 6$ variables with the affine equations from $\mathcal{G}_\Hydra$, the polynomials from \Cref{Equ: linear equations 1,Equ: linear equations 2,Equ: linear equations 3} modify to
    \begin{equation}\label{Equ: linear equations 4}
        g_{j, 8}^{(i)} = \left( \tilde{\mathcal{L}}_{j, i} + a_{j, i} \right)^2, \qquad 1 \leq i \leq r_\mathcal{H},\ j \in \{ 1, 2 \},
    \end{equation}
    where $\tilde{\mathcal{L}}_{j, i}$ is a linear polynomial in the remaining $2 \cdot r_\mathcal{H} - 2$ variables and $a_{j, i} \in \Fq$.
    Moreover, the rank of a matrix does not depend on the order of Gaussian elimination, so by Assumption \ref{Item: maximal rank} the new linear polynomials $\tilde{\mathcal{L}} = \left\{ \tilde{\mathcal{L}}_{1, i}, \tilde{\mathcal{L}}_{2, i} \right\}_{1 \leq i \leq 2 \cdot r_\mathcal{H}}$ still must contain a maximal linearly independent subset of rank $2 \cdot r_\mathcal{H} - 2$.

    Let $\mathbf{x}_\text{nl}$ denote the remaining $2 \cdot r_\mathcal{H} - 2$ non-linear variables.
    We can now collect this maximal linearly independent subset of $\tilde{\mathcal{L}}$ in a matrix $\mathbf{M} \in \Fq^{(2 \cdot r_\mathcal{H} - 2) \times (2 \cdot r_\mathcal{H} - 2)}$.
    Columns of $\mathbf{M}$ are indexed by the elements of $\mathbf{x}_\text{nl}$, and in the rows of $\mathbf{M}$ we write the coefficients of the maximal linearly independent equations.
    Next we pick new variables $\mathbf{\hat{x}} = \left( \hat{x}_1, \dots, \hat{x}_{2 \cdot r_\mathcal{H} - 2} \right)^\intercal$, and finally we perform a linear change of variables $\mathbf{\hat{x}} = \mathbf{M} \mathbf{x}_\text{nl}$.
    By construction $2 \cdot r_\mathcal{H}$ polynomials in $\mathcal{G}_\Hydra$ are of the form
    \begin{equation}
        \tilde{\mathcal{L}}_{j, i} (\mathbf{x}_\text{nl})^2 + \mathcal{A}_{j, i} (\mathbf{x}_\text{nl}) = 0,
    \end{equation}
    where $\mathcal{A}_{j, i} \in \Fq [\mathbf{x}_\text{nl}]$ is affine, $1 \leq i \leq 2 \cdot r_\mathcal{H}$ and $j \in \{1, 2\}$.
    Therefore, after the change of coordinates we produce $2 \cdot r_\mathcal{H} - 2$ polynomials of the form
    \begin{equation}
        g_i = \hat{x}_i^2 + \sum_{j = 1}^{2 \cdot r_\mathcal{H} - 2} \alpha_{i, j} \cdot \hat{x}_j + \alpha_{i, 0},
    \end{equation}
    where $\alpha_{i, j}, \alpha_{i, 0} \in \Fq$, together with four additional quadratic polynomials.
    In particular, $\hat{\mathcal{G}} = \left\{ g_i \right\}_{1 \leq i \leq 2 \cdot r_\mathcal{H} - 2} \subset \Fq \big[ \mathbf{\hat{x}} \big]$ is a DRL Gr\"obner basis with $\dim_{\Fq} \left( \hat{\mathcal{G}} \right) = 2^{2 \cdot r_{\mathcal{H}} - 2}$.

    We summarize this construction in a more formal manner in the next corollary.
    \begin{cor}\label[cor]{Cor: Hydra DRL Groebner basis}
        Let $\Fq$ be a finite field, let
        \[
            \mathcal{F}_\Hydra = \left\{ \mathbf{f}_1^{(i)}, \mathbf{f}_\mathcal{R}, \mathbf{f}_2^{(i)} \right\}_{1 \leq i \leq r_\mathcal{H}} \subset P = \Fq \Big[ \mathbf{y}, \mathbf{z}, \mathbf{x}_1^{(1)}, \dots, \mathbf{x}_1^{(r_\mathcal{H} - 1)}, \mathbf{x}_2^{(0)}, \dots, \mathbf{x}_2^{(r_\mathcal{H} - 1)}, \mathbf{k} \Big]
        \]
        be a \Hydra polynomial system, and let
        \begin{align*}
            \mathcal{G}_{\Hydra} &= \left\{ \mathbf{A}_{8} \mathbf{M}_{\mathcal{J}}^{-1} \mathbf{f}_{1}^{(i)}, \mathbf{B} \mathbf{M}_\mathcal{R}^{-1} \mathbf{f}_\mathcal{R}, \mathbf{A}_{8} \mathbf{M}_{\mathcal{J}}^{-1} \mathbf{f}_{2}^{(i)} \right\}_{1 \leq i \leq r_{\mathcal{H}}}, \\
            \mathcal{G}_{lin} &= \left\{ g \in \mathcal{G}_{\Hydra} \mid \degree{g} = 1 \right\}, \\
            \mathcal{G}_{squ} &= \left\{ g \in \mathcal{G}_{\Hydra} \mid \degree{g} = 2 \right\}.
        \end{align*}
        Assume that
        \begin{enumerate}[label=(\roman*)]
            \item the linear system from \Cref{Th: Hydra generic coordinates test} has full rank, and

            \item $\rank \left( \mathcal{G}_{lin}^{\topcomp} \right) = 2 \cdot 7 \cdot r_{\mathcal{H}} + 6$.
        \end{enumerate}
        Then there exists a linear change of coordinates $\mathcal{G}_{sub} \subset P \big[ \hat{x}_{i} \; \big\vert \; 1 \leq i \leq 2 \cdot r_{\mathcal{H}} - 2 \big]$ such that:
        \begin{enumerate}
            \item $\hat{\mathcal{G}}_{squ} = \mathcal{G}_{squ} \mod \left( \mathcal{G}_{lin}, \mathcal{G}_{sub} \right) \subset \Fq \big[ \hat{x}_{i} \; \big\vert \; 1 \leq i \leq 2 \cdot r_{\mathcal{H}} - 2 \big]$, where division is performed with respect to the DRL term order $\mathbf{y} > \mathbf{z} > \mathbf{x}_1^{(1)} > \ldots > \mathbf{x}_1^{(r_\mathcal{H} - 1)} > \mathbf{x}_2^{(0)} > \ldots > \mathbf{x}_2^{(r_\mathcal{H} - 1)} > \mathbf{k} > \mathbf{\hat{x}}$.

            \item For every $1 \leq i \leq 2 \cdot r_{\mathcal{H}} - 2$ there exists $g_{i} \in \hat{\mathcal{G}}_{squ}$ such that
            \begin{equation*}
                g_{i} = \hat{x}_{i}^{2} + \mathcal{A}_{i},
            \end{equation*}
            where $\mathcal{A}_{i} \in \Fq \big[ \hat{x}_{i} \; \big\vert \; 1 \leq i \leq 2 \cdot r_{\mathcal{H}} - 2 \big]$ is affine.

            \item $\dim_{\Fq} \left( \mathcal{G}_{lin} \cup \mathcal{G}_{sub} \cup \hat{\mathcal{G}}_{squ} \right) \leq 2^{2 \cdot r_{\mathcal{H}} - 2}$.
        \end{enumerate}
    \end{cor}

    Besides being tedious to do by hand, success of the change of coordinates depends on the matrices used in the \Hydra heads.
    Therefore, we did not attempt a full proof by hand, instead for specific instances we outsourced it to a computer algebra program where the construction can be performed in a matter of seconds.

    \subsection{Implementation}\label{Sec: Hydra implementation}
    We have implemented \Hydra and its iterated polynomial system in the computer algebra systems \SageMath \cite{SageMath} and \OSCAR \cite{OSCAR}.\footnote{\repository}
    In both systems we implemented the linear system from \Cref{Th: Hydra generic coordinates test} and the change of coordinates from \Cref{Sec: Extracting a Groebner Basis}.
    With the \SageMath implementation we can compute the DRL Gr\"obner basis for full rounds $r_\mathcal{H} = 39$ over $128$ bit prime fields in negligible time.
    The \OSCAR computer algebra system offers an efficient F4 implementation that can handle primes of size $q < 2^{31}$, therefore we recommend to use \OSCAR for experimental computations over small prime fields.

    For completeness, we mention that we implemented the rolling function without a constant addition, i.e.\ we set $\mathbf{c}_\mathcal{R}^{(i)} = \mathbf{0}$ for all $i \geq 1$.
    However, we emphasize that the constant addition does not affect either the linear system from \Cref{Th: Hydra generic coordinates test} or the change of coordinates from \Cref{Sec: Extracting a Groebner Basis}.

    \begin{ex}[A Concrete Instance]\label[ex]{Ex: Hydra instance}
        In \cite[Appendix~C]{EPRINT:GOSW22} the \Hydra designers propose concrete matrices over the prime field $q = 2^{127} + 45$
        \begin{align}
            \mathbf{M}_\mathcal{E} &= \circulant \left( 3, 2, 1, 1 \right), \qquad \qquad
            \mathbf{M}_\mathcal{I} =
            \begin{pmatrix}
                1 & 1 & 1 & 1 \\
                1 & 4 & 1 & 1 \\
                3 & 1 & 3 & 1 \\
                4 & 1 & 1 & 2
            \end{pmatrix}
            , \\
            \mathbf{M}_\mathcal{J} &=
            \begin{pmatrix}
                3 & 1 & 1 & 1 & 1 & 1 & 1 & 1 \\
                7 & 3 & 1 & 1 & 1 & 1 & 1 & 1 \\
                4 & 1 & 4 & 1 & 1 & 1 & 1 & 1 \\
                3 & 1 & 1 & 8 & 1 & 1 & 1 & 1 \\
                7 & 1 & 1 & 1 & 7 & 1 & 1 & 1 \\
                8 & 1 & 1 & 1 & 1 & 5 & 1 & 1 \\
                5 & 1 & 1 & 1 & 1 & 1 & 2 & 1 \\
                4 & 1 & 1 & 1 & 1 & 1 & 1 & 6
            \end{pmatrix}
            .
        \end{align}
        For security level $\kappa = 128$ bits the designers suggest to use $r_\mathcal{H} = 39$, see \cite[\S 5.5]{EC:GOSW23}.

        For this instance using our \SageMath implementation we verified that the \Hydra polynomial system is in generic coordinates for $m \in \{ 2, 3, 4, 5, 10 \}$ many samples.
        For $m = 2$ we can compute a \Hydra DRL Gr\"obner basis with the change of coordinates from \Cref{Sec: Extracting a Groebner Basis}.

        Recall the quadratic terms from the \Hydra heads, see \Cref{Equ: linear equations 1,Equ: linear equations 2,Equ: linear equations 3}.
        For $r_\mathcal{H} \geq 2$ and after the elimination of $2 \cdot 7 \cdot r_\mathcal{H} + 6$ variables, see \Cref{Equ: linear equations 4}, we observed that
        \begin{equation}
            \rank \left( \tilde{\mathcal{L}}_{1, 8}^{(3)}, \dots, \tilde{\mathcal{L}}_{1, 8}^{(r_\mathcal{H})}, \tilde{\mathcal{L}}_{2, 8}^{(1)}, \dots, \tilde{\mathcal{L}}_{2, 8}^{(r_\mathcal{H})} \right) = 2 \cdot r_\mathcal{H} - 2.
        \end{equation}
        So, before the elimination of $2 \cdot 7 \cdot r_\mathcal{H} + 6$ variables our change of coordinates corresponds to, see \Cref{Equ: linear equations 1,Equ: linear equations 2,Equ: linear equations 3},
        \begin{alignat}{2}
            \hat{x}_{i} &= \sum_{l = 1}^{8} (-1)^{\floor{\frac{l - 1}{4}}} \cdot x_{1, l}^{(i - 1)}, \qquad &&3 \leq i \leq r_\mathcal{H}, \\
            \hat{x}_{i + r_\mathcal{H} - 2} &= \sum_{l = 1}^{8} (-1)^{\floor{\frac{l - 1}{4}}} \cdot x_{2, l}^{(j - 1)}, \qquad &&1 \leq i \leq r_\mathcal{H}.
        \end{alignat}

        We also note that we did not find a round number $r_\mathcal{H} \geq 2$ for which the Gr\"obner basis extraction fails.
    \end{ex}

    \subsection{Cryptanalysis}\label{Sec: Hydra cryptanalysis}
    Next we discuss the cryptanalytic impact of our theoretical analysis.
    Throughout, this section we will always assume that we can perform the linear change of coordinates from \Cref{Sec: Extracting a Groebner Basis}.
    In particular, we stress that for the \Hydra instance from \Cref{Ex: Hydra instance} we are always able to construct a DRL Gr\"obner basis.
    Therefore, we have a quadratic \Hydra DRL Gr\"obner basis in $2 \cdot r_\mathcal{H} - 2$ equations and variables together with four additional quadratic polynomials at hand.
    This opens the door for two different attack strategies:
    \begin{enumerate}[label=(\Alph*)]
        \item Recomputation of the DRL Gr\"obner basis for the overdetermined system of $2 \cdot r_\mathcal{H} + 2$ quadratic polynomials.

        \item Polynomial system solving techniques for the DRL Gr\"obner basis of $2 \cdot r_\mathcal{H} - 2$ quadratic polynomials.
    \end{enumerate}
    For comparison, we also review the complexity estimation of the designers.

    \subsubsection{Estimation of the Designers}\label{Sec: Hydra designer estimate}
    The \Hydra designers considered the iterated polynomial model, see \cite[\S7.4]{EC:GOSW23} and \cite[Appendix~I.2]{EPRINT:GOSW22}, and reduced it to $2 \cdot r_\mathcal{H} + 2$ quadratic polynomials in $2 \cdot r_\mathcal{H} - 2$ variables.
    The designers assumed that this downsized system is semi-regular \cite{Bardet-Complexity}.
    Then, the degree of regularity $d_{\reg}$ is given by the index of the first negative coefficient in the power series
    \begin{equation}
        H (t)
        = \frac{(1 - t^2)^{2 \cdot r_\mathcal{H} + 2}}{(1 - t)^{2 \cdot r_\mathcal{H} - 2}},
    \end{equation}
    and the complexity of computing the DRL Gr\"obner basis is then bounded by \cite[Theorem~7]{Bardet-Complexity}
    \begin{equation}\label{Equ: Hydra semi-regular complexity}
    	\mathcal{O} \left( \binom{2 \cdot r_\mathcal{H} - 2 + d_{\reg}}{d_{\reg}}^\omega \right).
    \end{equation}
    For $r_\mathcal{H}^\ast = 29$ the designers estimate $d_{\reg} = 23$, with $\omega = 2$ this yields a complexity estimate of $\approx 2^{130.8}$ bits.
    Moreover, it is claimed that $r_\mathcal{H} = r_\mathcal{H}^\ast + 2 = 31$ rounds are sufficient to provide $128$ bits of security against Gr\"obner basis attacks.

    \subsubsection{Recomputing the DRL Gr\"obner Basis}
    After constructing the \Hydra DRL Gr\"obner basis we have one big advantage compared to the designer estimation: We yield a $\Fq$-Boolean polynomial system, see \Cref{Sec: Boolean Macaulay matrices}.
    Thus, in the Macaulay matrix we only have to consider square-free monomials.
    Effectively, this saves time and space.

    \paragraph{Proven Solving Degree Bound.}
    Recall from \Cref{Cor: Hydra Macaulay bound} that $\solvdeg_{DRL} \left( \mathcal{G}_\Hydra \right) \leq 2 \cdot r_\mathcal{H}$.
    So, in the worst case we have to construct the full $\Fq$-Boolean Macaulay matrix, see \Cref{Th: Boolean Macaulay matrix}.
    The complexity for constructing the $\Fq$-Boolean Macaulay matrix was given in \Cref{Cor: Boolean Macaulay matrix construction complexity}, for the \Hydra parameters $\abs{F} = 4$, $n, d = 2 \cdot r_\mathcal{H} - 2$ and $D = 2$ we yield the complexity
    \begin{equation}\label{Equ: Hydra Boolean Macaula matrix construction complexity}
        \mathcal{O} \left( 4 \cdot r_\mathcal{H} \cdot (2 \cdot r_\mathcal{H} - 1) \cdot \sum_{i = 0}^{2 \cdot r_\mathcal{H} - 2} \binom{2 \cdot r_\mathcal{H} + i}{i + 2} \cdot \binom{2 \cdot r_\mathcal{H} - 2}{i} \right).
    \end{equation}
    Further, by \Cref{Th: Boolean Macaulay matrix} Gaussian elimination on the $\Fq$-Boolean Macaulay matrix can be performed in $\mathcal{O} \left( 4 \cdot 2^{\omega \cdot (2 \cdot r_\mathcal{H} - 2)} \right)$ field operations.

    In \Cref{Tab: Hydra recomputing DRL Groebner Basis} we evaluated these complexities for sample round numbers.

    \paragraph{Semi-Regularity Assumption.}
    If we assume that the \Hydra polynomial system is semi-regular after the change of variables (\Cref{Sec: Extracting a Groebner Basis}), then \Cref{Equ: Hydra Boolean Macaula matrix construction complexity} improves to
    \begin{equation}
        \mathcal{O} \left( 4 \cdot r_\mathcal{H} \cdot (2 \cdot r_\mathcal{H} - 1) \cdot \sum_{i = 0}^{d_{\reg} - 2} \binom{2 \cdot r_\mathcal{H} - 2}{i} \cdot \binom{2 \cdot r_\mathcal{H} + i}{i + 2} \right),
    \end{equation}
    and the complexity for Gaussian elimination improves to
    \begin{equation}
        \mathcal{O} \left( 4 \cdot \left( \sum_{i = 0}^{d_{\reg} - 2} \binom{2 \cdot r_\mathcal{H} - 2}{i} \right) \cdot \left( \sum_{i = 0}^{d_{\reg}} \binom{2 \cdot r_\mathcal{H} - 2}{i} \right)^{\omega - 1} \right).
    \end{equation}

    In \Cref{Tab: Hydra recomputing DRL Groebner Basis} we evaluated the complexities together with the one from the designers (\Cref{Sec: Hydra designer estimate}).
    Construction of the $\Fq$-Boolean Macaulay matrix dominates this approach.
    As expected, under the semi-regularity assumption the Boolean approach is slightly more performative than the designers' estimation.
    In particular, an ideal adversary with $\omega = 2$ could break up to $r_\mathcal{H} = 29$ rounds below $128$ bits.
    Moreover, if the $\Fq$-Boolean Macaulay matrix can be constructed with negligible cost, then an ideal adversary could break up to $r_\mathcal{H} = 34$ rounds.

	\begin{table}[H]
		\centering
		\caption{\Hydra complexity estimations for recomputing the DRL Gr\"obner basis.
			The column $d_{\reg}$ shows the degree of regularity under the semi-regularity assumption.
			All estimations use $\omega = 2$.}
		\label{Tab: Hydra recomputing DRL Groebner Basis}
		\resizebox{\linewidth}{!}{
			\begin{tabular}{ c | c | M{20mm} | M{19mm} | M{20mm} | M{19mm} || M{21mm} }
				\toprule
				\multicolumn{2}{ c | }{} & \multicolumn{5}{ c }{Complexity (bits)} \\
				\midrule
				\multicolumn{2}{ c | }{} & \multicolumn{2}{ c | }{Proven Solving Degree} & \multicolumn{2}{ c ||  }{Semi-Regular} & \\
				\midrule
				$r_{\mathcal{H}}$ & $d_{\reg}$ & Boolean Matrix Construction & Boolean Gaussian Elimination &  Boolean Matrix Construction & Boolean Gaussian Elimination &Semi-Regular Estimate \cite{EC:GOSW23} \\
				\midrule

				$28$ & $22$ & $148.72$ & $110$ & $118.28$ & $102.10$ & $125.31$ \\
                $29$ & $23$ & $153.89$ & $114$ & $123.29$ & $106.25$ & $130.80$ \\
                $30$ & $24$ & $159.05$ & $118$ & $128.27$ & $110.40$ & $136.29$ \\
                $31$ & $25$ & $164.21$ & $122$ & $133.25$ & $114.54$ & $141.77$ \\
                $32$ & $26$ & $169.37$ & $126$ & $138.21$ & $118.66$ & $147.26$ \\
                $33$ & $27$ & $174.52$ & $130$ & $143.16$ & $122.79$ & $152.75$ \\
                $34$ & $28$ & $179.67$ & $134$ & $148.10$ & $126.90$ & $158.23$ \\
                $35$ & $28$ & $184.82$ & $138$ & $150.37$ & $129.66$ & $160.24$ \\
                $39$ & $32$ & $205.41$ & $154$ & $170.12$ & $146.16$ & $182.22$ \\
                $45$ & $37$ & $236.24$ & $178$ & $197.03$ & $169.55$ & $211.72$ \\

				\bottomrule
			\end{tabular}
		}
	\end{table}

    \paragraph{Empirical F4 Working Degree.}
    The \Hydra designers performed small scale Gr\"obner basis experiments for \Hydra with the F4 \cite{Faugere-F4} algorithm, see \cite[Table~4]{EPRINT:GOSW22}.
    With our \OSCAR implementation we replicated these experiments to verify that F4's highest working degree matches the semi-regular degree of regularity before and after the change of coordinates.
    Our results are recorded in \Cref{Tab: Hydra F4 working degree}.
    Our experiments have been performed on an AMD EPYC-Rome (48) CPU with 94 GB RAM.

    In all our experiments F4's highest working degree indeed matches the semi-regular degree of regularity.
    As expected, after the change of coordinates the computation is faster, because F4 will implicitly eliminate all monomials which are divisible by a square.

    \clearpage 
    \begin{table}[H]
        \centering
        \caption{\Hydra highest working degree and total running time of F4 over the prime field $\F_{7741}$.
            The column $d_{\reg}$ shows the degree of regularity under the semi-regularity assumption.}
        \label{Tab: Hydra F4 working degree}
        \resizebox{\linewidth}{!}{
            \begin{tabular}{ c | c | M{20mm} | M{14mm} || M{20mm} | M{14mm} || M{20mm} | M{16mm} }
                \toprule
                \multicolumn{2}{ c | }{} & \multicolumn{ 2 }{ M{34mm} || }{Before Change of Coordinates} & \multicolumn{2}{ M{34mm} || }{After Change of Coordinates} & \multicolumn{2}{ M{36mm} }{Grassi et al.'s data \cite[Table 4]{EPRINT:GOSW22}} \\
                \midrule
                $r_\mathcal{H}$ & $d_{\reg}$ & Highest F4 working degree & Total time (s) & Highest F4 working degree & Total time (s) & Highest F4 working degree & Total time (s) \\
                \midrule

                $3$ & $3$ & $3$ & $0.01$ & $3$ & $0.01$ & - & - \\
                $4$ & $3$ & $3$ & $0.01$ & $3$ & $0.01$ & - & - \\
                $5$ & $4$ & $4$ & $0.02$ & $4$ & $0.01$ & - & - \\
                $6$ & $5$ & $5$ & $0.16$ & $5$ & $0.03$ & $5$ & $0.1$ \\
                $7$ & $5$ & $5$ & $1.35$ & $5$ & $0.25$ & $5$ & $0.9$ \\
                $8$ & $6$ & $6$ & $14.88$ & $6$ & $2.68$ & $6$ & $13.4$ \\
                $9$ & $7$ & $7$ & $384.13$ & $7$ & $67.19$ & $7$ & $293.8$ \\
                $10$ & $8$ & $8$ & $8676.30$ & $8$ & $3194.95$ & $8$ & $8799.5$ \\
                $11$ & $8$ & - & - & - & - & $8$ & $229606.8$ \\

                \bottomrule
            \end{tabular}
        }
    \end{table}

    \subsubsection{Term Order Conversion \& Eigenvalue Method}\label{Sec: term order conversion}
    With a \Hydra DRL Gr\"obner basis at hand, see \Cref{Sec: Extracting a Groebner Basis}, we can proceed to direct system solving techniques.
    For example, we can use a generic algorithm like the state-of-the-art probabilistic FGLM algorithm \cite{Faugere-SubCubic} to convert to a LEX Gr\"obner basis, and then factor the univariate polynomial.
    Assuming that the \Hydra DRL Gr\"obner basis is in shape position the complexity of term order conversion is bounded by \cite[Proposition~3]{Faugere-SubCubic}
    \begin{equation}\label{Equ: Hydra FGLM complexity}
        \mathcal{O} \Big( (2 \cdot r_\mathcal{H} - 2) \cdot 2^{2 \cdot r_\mathcal{H} - 2} \cdot \big( 2^{(\omega - 1) \cdot (2 \cdot r_\mathcal{H} - 2)} + (2 \cdot r_\mathcal{H} - 2)^2 \cdot \log_2 (2 \cdot r_\mathcal{H} - 2)  \big) \Big)
    \end{equation}
    field operations.
    Next we have to factor the univariate polynomial of degree $2^{2 \cdot r_\mathcal{H} - 2}$ via the GCD with the field equation.
    However, even for full rounds $r_\mathcal{H} = 39$ this degree is well below $2^{128}$.
    Therefore, the GCD complexity from \Cref{Equ: GCD with field equation complexity} will not exceed $128$ bits, and henceforth can be ignored.

    On the other hand, the \Hydra DRL Gr\"obner basis satisfies the structure assumption from \Cref{Equ: special shape}, so we can pass to the dedicated Eigenvalue Method to find a $\Fq$-valued solution.
    For \Hydra, the complexity from \Cref{Equ: complexity solving structured systems} then evaluates to
    \begin{equation}\label{Equ: Hydra: Eigenvalue Methid complexity}
        \mathcal{O} \left( 2^{\omega + 1} \cdot \frac{2^{\omega \cdot (2 \cdot r_\mathcal{H} - 2)} - N^{r_\mathcal{H} - 1}}{2^{2 \cdot \omega} - N} \right),
    \end{equation}
    where $N$ is a bound on the number of $\Fq$-valued solutions in every iteration.
    Again, the GCD complexity can be considered negligible.
    Since a bound on $N$ is in general not available, we consider a worst-case scenario approach.
    We assume that the first root which an adversary recovers in every iteration leads to the true \Hydra key, i.e.\ we use $N = 1$ in \Cref{Equ: Hydra: Eigenvalue Methid complexity}.

    In \Cref{Tab: Hydra FGLM Eigenvalue Method} we evaluated the complexities for direct system solving techniques and compared them to the other strategies.
    Our estimations are again from a designer's point of view which assumes an ideal adversary who can achieve the linear algebra constant $\omega = 2$.
    In this scenario, term order conversion breaks up to $r_\mathcal{H} = 31$ rounds below $128$ bits.
    This invalidates the claim of the \Hydra designers \cite[\S 7.4]{EC:GOSW23} that $r_\mathcal{H} = 31$ rounds are sufficient to provide $128$ bits of security against Gr\"obner basis attacks for an ideal adversary.
    Moreover, via the dedicated Eigenvalue Method up to $r_\mathcal{H} = 33$ do not achieve $128$ bits of security.
    For full rounds $r_\mathcal{H} = 39$ the security margin is reduced by $\approx 30$ bits, however the security claim is not affected.

    Let $r_\mathcal{H}^\ast$ be the minimum round number which achieves $128$ bits of security in a Gr\"obner basis attack, i.e.\ according to our analysis we have $r_\mathcal{H}^\ast = 34$.
    Recall from \Cref{Equ: head round numbers} that the head round number is set as $r_\mathcal{H} = \ceil{1.25 \cdot \max \left\{ 24, 2 + r_\mathcal{H}^\ast \right\}}$.
    Thus, our analysis suggests that this round number is increased to $r_\mathcal{H} = 45$ to recover the security margin originally intended by the designers.

	In case one wants to be extra conservative, one can consider the construction of the $\Fq$-Boolean Macaulay matrix to be for free.
	Then, we have $r_\mathcal{H}^\ast = 35$, see \Cref{Tab: Hydra recomputing DRL Groebner Basis}, which implies $r_\mathcal{H} = 47$.

    \begin{table}[H]
        \centering
        \caption{\Hydra complexity estimations for term order conversion and the Eigenvalue Method.
                 The column $d_{\reg}$ shows the degree of regularity under the semi-regularity assumption.
                 All estimations use $\omega = 2$.}
        \label{Tab: Hydra FGLM Eigenvalue Method}
        \begin{tabular}{ c | c | M{18mm} | M{16mm} | M{21mm} || M{21mm} }
            \toprule
            \multicolumn{2}{ c | }{} & \multicolumn{4}{ c }{Complexity (Bits)} \\
            \midrule
            $r_\mathcal{H}$ & $d_{\reg}$ & Term Order Conversion & Eigenvalue Method & Boolean Semi-Regular Estimate & Semi-Regular Estimate \cite{EC:GOSW23} \\
            \midrule

            $28$ & $22$ & $113.75$ & $107.09$ & $118.28$ & $125.31$ \\
            $29$ & $23$ & $117.81$ & $111.09$ & $123.29$ & $130.80$ \\
            $30$ & $24$ & $121.86$ & $115.09$ & $128.27$ & $136.29$ \\
            $31$ & $25$ & $125.91$ & $119.09$ & $133.25$ & $141.77$ \\
            $32$ & $26$ & $129.95$ & $123.09$ & $138.21$ & $147.26$ \\
            $33$ & $27$ & $134.00$ & $127.09$ & $143.16$ & $152.75$ \\
            $34$ & $28$ & $138.04$ & $131.09$ & $148.10$ & $158.23$ \\
            $35$ & $28$ & $142.09$ & $135.09$ & $150.37$ & $160.24$ \\
            $39$ & $32$ & $158.25$ & $151.09$ & $170.12$ & $182.22$ \\
            $45$ & $37$ & $182.46$ & $175.09$ & $197.03$ & $211.72$ \\

            \bottomrule
        \end{tabular}
    \end{table}

    \paragraph*{Small Scale Experiments.}
    With our \OSCAR implementation we computed the characteristic polynomial of the \Hydra multiplication matrix $\mathbf{M}_{\hat{x}_{2 \cdot r_\mathcal{H} - 2}}$, see \Cref{Equ: characteristic polynomial},  for small $r_\mathcal{H}$ and measured the time of the overall computation.
    In addition, we computed the polynomial matrix determinant from \Cref{Equ: characteristic polynomial} for uniformly random matrices $\mathbf{A}_0, \mathbf{A}_1 \in \Fq^{(2 \cdot r_\mathcal{H} - 3) \times (2 \cdot r_\mathcal{H} - 3)}$.
    For construction and determinant of the matrix we used \OSCAR's functions for multivariate polynomial division and determinant respectively.
    Our results are recorded in \Cref{Tab: Hydra characteristic polynomial}.
    In addition, we performed term order conversion to LEX for our \Hydra DRL Gr\"obner basis with \OSCAR's \texttt{fglm} function.
    For comparison, we also generated a random multivariate quadratic polynomial system in the same number of variables as \Hydra, computed the DRL Gr\"obner basis, and finally converted to LEX via FGLM.
    Our FGLM results are recorded in \Cref{Tab: Hydra FGLM}.
    All experiments in this section have been performed on an AMD EPYC-Rome (48) CPU with 94 GB RAM.

    For the characteristic polynomial, we observed that the construction time for the multiplication matrix is negligible compared to the one for the characteristic polynomial.
    Moreover, \Hydra characteristic polynomials are computed faster than for random matrices $\mathbf{A}_0$ and $\mathbf{A}_1$.

    For term order conversion, we observed that FGLM on \Hydra systems requires less time than for random multivariate quadratic systems.
    We also note that in all our experiments the \Hydra LEX Gr\"obner basis was in $\hat{x}_{2 \cdot r_\mathcal{H} - 2}$-shape position.

    \clearpage 
    \begin{table}[H]
        \centering
        \caption{\Hydra running times for construction of the multiplication matrix and its characteristic polynomial over the prime field $\F_{7741}$.
            The column ``Determinant Random Matrix'' shows the running time of the determinant from \Cref{Equ: characteristic polynomial} with uniformly random $\mathbf{A}_0, \mathbf{A}_1 \in\F_{7741}^{(2 \cdot r_\mathcal{H} - 3) \times (2 \cdot r_\mathcal{H} - 3)}$.
        }
        \label{Tab: Hydra characteristic polynomial}
        \begin{tabular}{ c | M{21mm} | M{21mm} || M{25mm} }
            \toprule
            & \multicolumn{3}{ c }{Time (s)} \\
            \midrule
            $r_\mathcal{H}$ & Construction \Hydra Matrix & Characteristic Polynomial \Hydra Matrix & Determinant Random Matrix \\
            \midrule

            $5$ & $< 0.1$ & $0.3$    & $0.4$     \\
            $6$ & $0.7$   & $73.9$   & $82.8$    \\
            $7$ & $11.1$  & $8336.7$ & $11722.9$ \\

            \bottomrule
        \end{tabular}
    \end{table}

    \begin{table}[H]
        \centering
        \caption{\Hydra FGLM running times over the prime field $\F_{7741}$.
                 The column ``Random Quadratic F4'' shows the running time for computing a DRL Gr\"obner basis with F4 for a random multivariate quadratic polynomial system in the same number of variables as the \Hydra system.
                 The column ``Random Quadratic FGLM'' shows the running time of FGLM for the random DRL Gr\"obner basis.
            }
        \label{Tab: Hydra FGLM}
        \begin{tabular}{c | c || c | c }
            \toprule
            & \multicolumn{3}{ c }{Time (s)} \\
            \midrule
            $r_\mathcal{H}$ & \Hydra FGLM & Random Quadratic F4 & Random Quadratic FGLM \\
            \midrule

            $5$ & $0.3$     & $0.1$  & $0.4$    \\
            $6$ & $20.0$    & $0.8$  & $26.1$   \\
            $7$ & $1265.7$  & $15.6$ & $1598.5$ \\
            $8$ & $78345.1$ & $1123.9$ & $99492.9$ \\

            \bottomrule
        \end{tabular}
    \end{table}

    \section*{Acknowledgments}
    Matthias Steiner has been supported in part by the enCRYPTON project (grant agreement No.\ 101079319).

	\pdfbookmark[1]{References}{References}
    \bibliographystyle{alphaurl}
    \bibliography{biblio.bib,abbrev3.bib,crypto.bib}

\end{document}